\documentclass[11pt,a4paper,UKenglish]{article}
\title{On the generic information capacity of relational schemas with a single binary relation} %TODO Please add

\author{
Beno\^{\i}t Groz\\
Universit\'{e} Paris-Saclay, CNRS, LISN, France\\
\texttt{groz@lisn.fr}\\[1em]
\begin{tabular}{cc}
Jan Hidders\thanks{Corresponding author.} & Nina Pardal \\
Birkbeck, University of London, UK & University of Edinburgh, UK \\
\texttt{j.hidders@bbk.ac.uk} & \texttt{npardal@ed.ac.uk}\\[1em]
Jan Van den Bussche & Piotr Wieczorek \\
Hasselt University, Belgium & University of Wrocław, Poland \\
\texttt{jan.vandenbussche@uhasselt.be} & \texttt{piotr.wieczorek@cs.uni.wroc.pl} \\
\end{tabular}
}
\usepackage{amsthm,amsmath,amssymb,hyperref,cleveref,thm-restate,enumerate}  % for theorem-like environments
\usepackage[skip=1em]{caption}
\usepackage{adjustbox}
\usepackage{lmodern}
\usepackage[T1]{fontenc}
\newtheoremstyle{remarkstyle}% <name>
  {3pt}% <Space above>
  {3pt}% <Space below>
  {}% <Body font>\usepackage{hyperref}
  {}% <Indent amount>
  {\bfseries}% <Theorem head font>
  {.}% <Punctuation after theorem head>
  { }% <Space after theorem head>
  {}% <Theorem head spec>

\theoremstyle{plain}
\newtheorem{theorem}{Theorem}[section]

\newtheorem{lemma}[theorem]{Lemma}
\newtheorem{proposition}[theorem]{Proposition}

\theoremstyle{remark}
\newtheorem{remark}[theorem]{Remark}
\newtheorem{example}[theorem]{Example}

\usepackage{todonotes}
\usepackage{arydshln} % for dashed lines in tables
\usepackage{stmaryrd}
\usepackage{hhline}
\usepackage{booktabs}

\newcommand{\key}[2]{{#1}^{\#}[{#2}]}

\newcommand{\ind}[4]{{#1}[{#2}] \subseteq {#3}[{#4}]}

\newcommand{\schema}{\mathcal{S}}
\newcommand{\sch}{\schema}
\newcommand{\V}{\mathcal{V}}
\newcommand{\cfun}{\mathcal{F}}
\newcommand{\sem}[1]{[\![ {#1} ]\!]}
\newcommand{\dom}{\mathbf{dom}}
\newcommand{\valDom}{\dom}
\newcommand{\idom}{\mathbf{idom}}

\newcommand{\Nat}{\mathbb{N}}

\newcommand{\adom}{\mathit{adom}}
\newcommand{\autgrp}[2]{\mathbf{Aut}_{#1}({#2})}
\newcommand{\id}[1]{\ensuremath{\mathsf{id}_{#1}}}
\newcommand{\idEq}{=_{\rm id}}
\newcommand{\ideq}{\idEq}
\newcommand\leqgen{\leq_{\rm gen}}
\newcommand\leqidgen{\leq^{\rm id}_{\rm gen}}
\newcommand\legen{<_{\rm gen}}
\newcommand\leabs{<_{\rm abs}}
\newcommand\leqabs{\leq_{\rm abs}}
\newcommand\substrong{\subset_{\rm strong}}
\newcommand\lestrong{<_{\rm strong}}
\newcommand\equivgen{\equiv_{\rm gen}}

\newcommand{\digraph}{\textsc{digraph}}
\newcommand{\symm}{\textsc{symm}}
\newcommand{\sourcefree}{\textsc{source~free}}
\newcommand{\sourcesinkfree}{\textsc{source\&sink~free}}
\newcommand{\sourcesinkfreebreak}{%
  \textsc{source}\allowbreak\&\textsc{sink~free}%
}
\newcommand{\cycles}{\textsc{cycles}}
\newcommand{\outdegleo}{\textsc{outdeg$\leq${\small 1}}}
\newcommand{\outdego}{\textsc{outdeg{\small 1}}}
\newcommand{\pathscycles}{\textsc{paths\&cycles}}
\newcommand{\symmdego}{\textsc{symm~deg{\small 1}}}

\begin{document}

\maketitle

%TODO mandatory: add short abstract of the document
\begin{abstract}

  We consider database schemas consisting of a single binary
  relation, with key constraints and inclusion dependencies.
  Over this space of 20 schemas, we completely characterize when
  one schema is generically dominated by another schema.  Generic
  dominance, a classical notion for measuring information
  capacity, expresses that every instance of a schema can be
  uniquely represented in the dominating schema, through
  application of a deterministic, generic data transformation.
  Our investigation is motivated both by current interest in
  schema design for graph databases, as well as by intrinsic
  scientific interest.  We also consider the ternary case, but
  without inclusion dependencies, and discuss how the notions
  change in the presence of object identifiers.

\end{abstract}

\section{Introduction}

The question of when two database schemas are equivalent, and
what that even means, is very natural.  Not surprisingly, this
question has been a classical research topic in database theory,
dating all the way back to Codd \cite{codd2}.  Rather than
approaching the notion of equivalence head on, it is more
informative to approach this question through a notion of
dominance,\footnote{Also called inclusion
\cite{atzeni-inclusion}.} in the sense of information capacity.
Here, one schema dominates another schema if there exists a
lossless transformation from instances of the second schema to
instances of the first.  Equivalence can then simply be defined
as dominance in both directions.

Of course, it still remains to formalize what we exactly mean by
a transformation, and what it means for a transformation to be
lossless.  In this paper, we focus on the notion of \emph{generic
dominance} \cite{hy_format}, where a transformation is defined as
a mapping from instances to instances that is \emph{generic}.
Lossless is taken to mean injective.

Genericity of a mapping from instances to instances, which means
that the mapping preserves isomorphisms, was identified early on
as a natural consistency criterion for database queries and
transformations, by Aho and Ullman \cite{au_universality}.  After
playing a fundamental role in the seminal work by Chandra and
Harel on computable queries, genericity became an established
notion in database theory \cite[Part~E]{ahv_book}.  Indeed, any
schema mapping defined in a logical language popular in database
theory (such as conjunctive queries, first-order logic, datalog,
source-to-target dependencies, fixpoint logic, second-order
logic) is generic, almost by definition of what it means to be
logical \cite{tarski_logical}.

\begin{figure}
  \centering
  \hspace*{1em}
  \begin{tabular}{@{}c@{}}
  \scalebox{0.6}{\includegraphics{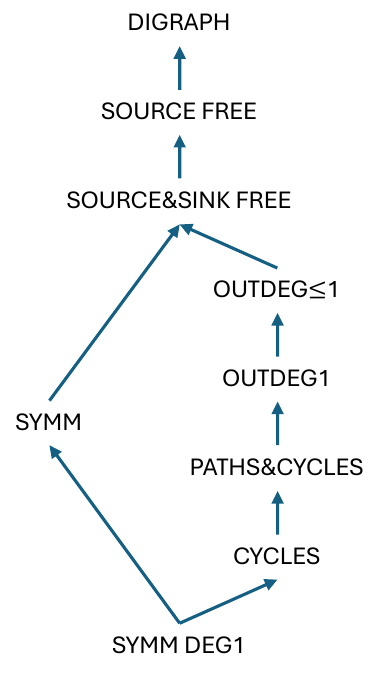}}
  \end{tabular}
  \hfill
  \begin{tabular}{@{}c@{}}
  \scalebox{0.80}{\includegraphics{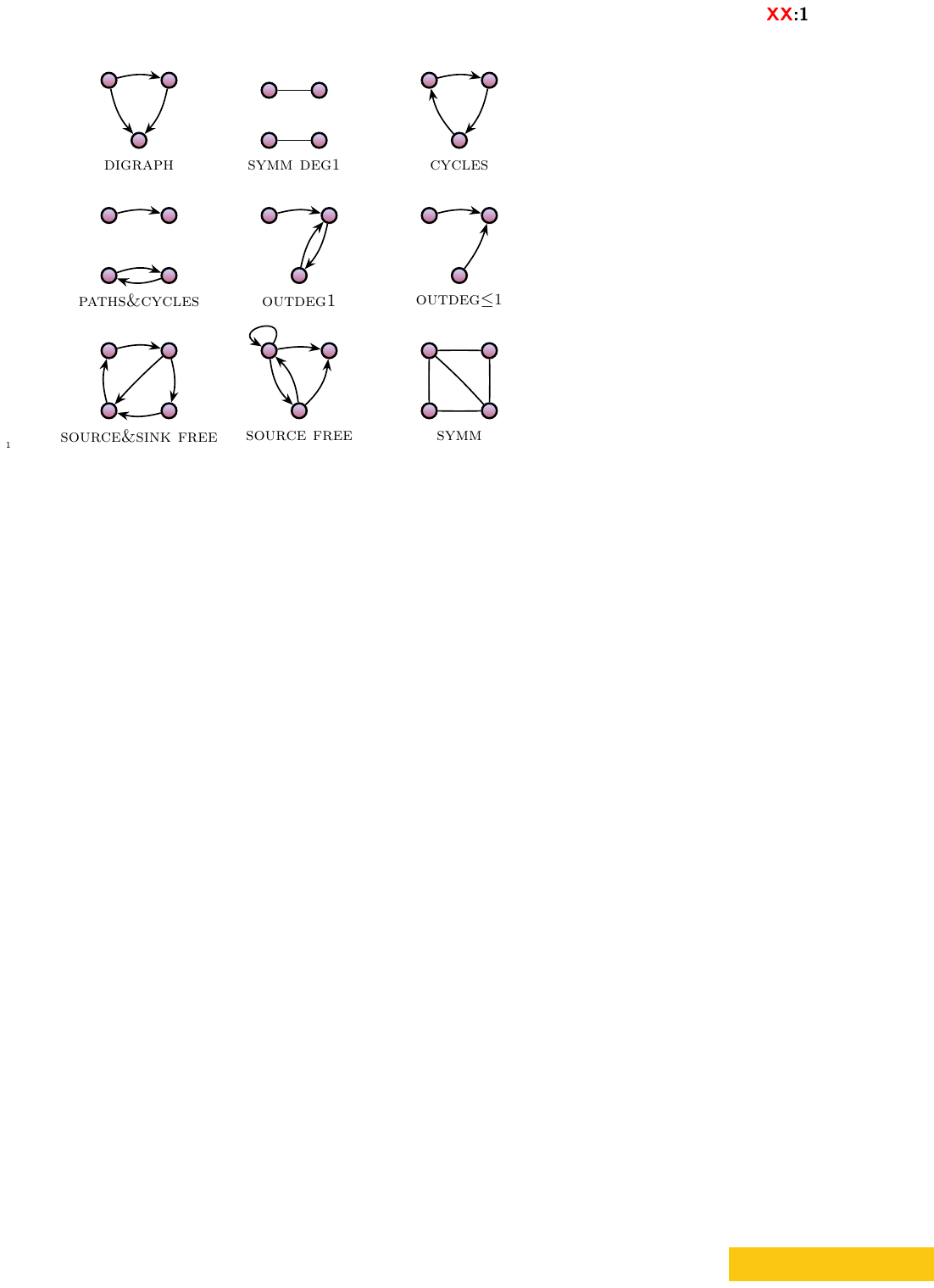}}
  \end{tabular}
  \hspace*{1em}
  \caption{Left: Generic dominance relationships over single binary
  relation schemes with keys and inclusion dependencies. Right:
  Conceptual illustration of the indicated graph classes. They are
  defined formally in Section~\ref{secbinary}.}
  \label{fighasse}
\end{figure}

\paragraph*{Results} In this paper our aim is to completely
characterize generic dominance of \textbf{single binary relation
schemes,} which can have integrity constraints taking the form of
\textbf{keys and inclusion dependencies.}  Syntactically, there
are 20 such schemas.  It turns out that they group into nine
distinct equivalence classes, each of which corresponds to a natural
class of directed graphs.  In Figure~\ref{fighasse} we map out
the \textbf{complete Hasse diagram} of strict generic dominance
relationships between these equivalence classes.  At the top is
\digraph, the class of all directed graphs (always assumed to be
without isolated nodes), corresponding to binary relations
without any integrity constraints.  At the bottom is the smallest
class $\symmdego$: symmetric (undirected) graphs where all nodes have degree
one.  In other words, these are just unions of isolated
undirected edges, and correspond to binary relations $R$ where
the first column is a key and the inclusion dependency
$R[1,2] \subseteq R[2,1]$ holds.

For example, from the diagram we see that the information
capacity of the class of source-free directed graphs
(\sourcefree) is not sufficient to represent all directed graphs.
Note that \sourcefree\ corresponds to binary relations $R$
satisfying $R[1] \subseteq R[2]$.
For another example, we see that the classes of undirected
graphs (\symm) and of disjoint unions of cycles (\cycles) have
incomparable information capacities: we cannot generically
transform disjoint unions of cycles into undirected graphs without loss of
information, or vice versa.  Note that \cycles\ corresponds to
binary relations $R$ where the first column is a key and where
$R[1] \subseteq R[2]$.  

These are just a few of the conclusions that can be drawn from
Figure~\ref{fighasse}; indeed, the figure packs quite a lot of results.
In a separate part of the paper, we also establish the
\textbf{Hasse diagram for ternary relations} with key
constraints, although we omit inclusion dependencies for that
case.

\paragraph*{Motivation}  We came to this research via the current
interest in schema design for graph databases (e.g.,
\cite{bonivoigt-graph-schema,pgschema,shacl-shex-pgschema,kger}).
In the context of \textbf{adoption} of graph database systems, an
important question arises: how can an existing operational
relational database schemas, with all its integrity constraints,
be viewed or represented as a graph database schema?  For such an
exercise to be successful, the graph schema should have at least
the information capacity of the relational schema.  Moreover,
since graph schema languages often have a close affinity to
conceptual models such as Entity-Relationship diagrams, the same
question arises, but in the other direction, in conceptual
modeling of databases \cite{enrico-conceptual}.

We are thus motivated to gain a comprehensive understanding of
the information capacities of graph schemas, and how they compare
to those of relational schemas.  Such a systematic investigation
is best initiated in the simplest nontrivial setting.  Now a
single binary relation scheme is simultaneously a very simple yet
nontrivial relational database schema, and the most stripped-down
graph schema possible, representing an unlabeled directed graph.
Moreover, keys and referential integrity (inclusion constraints)
are the two most basic kinds of integrity constraint for
relational database, and are also central in current proposals
(cited above) for graph schemas.  In this way we arrive exactly
at the setting of our work presented here.

\paragraph*{Organisation}
Section~\ref{secrelated} discusses
related work.  Section~\ref{secprelim} provides preliminaries.
In Section~\ref{secbinary} we introduce the nine directed graph
classes that result from generic equivalence classes of single
binary relation schemes with keys and inclusion dependencies.  We
then establish the Hasse diagram of generic dominance
relationships.
Section~\ref{secternary} does the same for the case of a single
ternary relation with key dependencies.

In Section~\ref{secid}, we present first insights into how the
notion of generic dominance should be extended to accommodate
transformations that \textbf{create new identifiers.}  In general, we may
need such identifiers when mapping arbitrary relational databases
to graph databases, or to restructure graph databases.  Finally,
in Section~\ref{seconc}, we provide a conclusion and an outlook
on the further research needed to extend our investigation to
general graph databases.

\section{Related work} \label{secrelated}

Initial concepts of dominance among database schemas were first
considered by Codd \cite{codd2} and Atzeni et
al.~\cite{atzeni-inclusion}, and consolidated by Hull and Yap in
their definitions of generic and absolute dominance
\cite{hy_format}.  Absolute dominance is based purely on
cardinality and is mostly useful for proving negative results,
since no absolute dominance implies no generic dominance. This
is a technique we will use here as well.  Hull and Yap were
motivated by the advent of semantic and complex-object data
models, involving data structures such as record formation, set
formation, and union types.  Generic dominance can be used to
gauge the information capacity of different such datatype
constructs \cite{abihull_restructtcs88}.  Another notable result
in this direction is that the class of directed graphs is not
generically dominated by the class of hypergraphs
\cite{odunlaing}.

The systematic investigation of generic dominance in the
relational model was initiated by Hull back in 1986
\cite{hull_siam}.  Hull showed how non-generic dominance can be
proven in specific cases where absolute dominance holds, by
careful analysis of automorphisms of instances.  This is also a
technique we will use here as well.  One of Hull's main technical
results is that every relational database schema is generically
dominated by a single relation scheme of a high enough arity (in
a sense that can be made precise).  From that, it is then shown
that relational database schemas without any constraints are
generically equivalent only if they are equal up to reordering.
Hull also defines \emph{calculous} dominance, which is generic
dominance through a data transformation expressible in
first-order logic (relational calculus).  Interestingly, in the
present paper, in each case where we have generic dominance, we
actually have calculous dominance.

Our present work is a clear continuation of Hull et al.'s systematic
line of research, which has lain dormant for 40 years, despite
its undeniable merit and relevance.  Note however that the
concept of information capacity in itself was certainly picked up
in a number of works and has been applied to validate schema
transformation rules (e.g., \cite{st1,st2,st3,st4,st5}).

We also note that in the early days of database theory,
equivalence of relational database schemes really was a notion of
equivalence of queries or views, namely, the equivalence of two
join expressions over the projections of a universal relation
\cite{bmsu_equivalence}.  Finally, we should mention the large
body of work on schema mappings in data exchange, which is
concerned with semantics, expressivity, and complexity of
logical mechanisms for specifying relationships between instances
of two different schemas \cite{foundations-data-exchange-book}.
In contrast, here we are concerned with the intrinsic question of
whether two given schemas are (generically) equivalent.

\section{Preliminaries} \label{secprelim}

We work in the relational data model under the unnamed
perspective \cite{ahv_book}.  A \emph{relational vocabulary} is a
finite set $\V$ of relation names, where each relation name has
an associate arity (a natural number).  We use the notation $R/k$
to indicate that $R$ has arity $k$.  We assume a fixed countably
infinite domain $\dom$ of atomic data values.  An \emph{instance}
$I$ of $\V$ assigns to each relation name $R/k$ of $\V$ a finite
$k$-ary relation $I(R)$ on $\dom$.
The set of elements from $\dom$ that appear in instance $I$ is
called the \emph{active domain} of $I$ and denoted by $\adom(I)$.

We will consider two kinds of integrity constraints on relational
vocabularies $\V$.  A \emph{key constraint} is an expression of
the form $\key RX$ with $R/k \in \V$ and $X \subseteq
\{1,\dots,k\}$.  An \emph{inclusion dependency (IND)} is an expression
of the form $\ind RXSY$, where $R/k$ and $S/m$ are relation names
in $\V$, and $X$ and $Y$ are lists of distinct elements of
$\{1,\dots,k\}$ and $\{1,\dots,m\}$ respectively, of the same
length.

An instance $I$ of $\V$ is said to satisfy the key
constraint $\key RX$ if $I(R)$ has no two different tuples that
agree on the components in $X$.  Also, $I$ is said to satisfy the
IND $\ind R{i_1,\dots,i_p}S{j_1,\dots,j_p}$ if
$\{(t_{i_1},\dots,t_{i_p}) \mid t \in I(R)\}$ is a subset of
$\{(t_{j_1},\dots,t_{j_p}) \mid t \in I(S)\}$.
Here, $t_i$ denotes the $i$th component of tuple $t$.

We refer to a pair $\schema=(\V,\Sigma)$, where $\V$ is a relational
vocabulary and $\Sigma$ is a set of keys and INDs
on $\V$, as a \emph{schema}.  An \emph{instance} of
$\schema$ is simply an instance of $\V$ satisfying all
constraints in $\Sigma$.  The set of all instances of $\schema$
is denoted by $\sem\schema$.  We can treat a relational
vocabulary as a schema where $\Sigma$ is empty.

\subparagraph{Schema mappings and genericity}
In this paper, a \emph{schema mapping} from
schema $\schema_1$ to schema $\schema_2$ is simply a total mapping
$f:\sem{\schema_1} \to \sem{\schema_2}$.  Such $f$ is called
\emph{generic} if there exists a finite set $C \subseteq \dom$
such that for every permutation $\pi$ of $\dom$ that is
the identity on $C$, and for every instance $I$ of $\schema_1$,
we have $f(\pi(I)) = \pi(f(I))$.  In this case we also say that
$f$ is $C$-generic.

\begin{remark}
  An equivalent definition of $C$-genericity is that it preserves
  $C$-isomorphism.  More precisely, if $\pi : I_1 \cong I_2$ is an
  isomorphism that is the identity on $C$, then also $\pi :
  f(I_1) \cong f(I_2)$. \qed
\end{remark}

The idea of $C$-genericity is that $f$ can treat the values in
$C$ as special constants, but otherwise treats all data values
symmetrically; only equalities matter.  For example, a
constant-equality selection $\sigma_{\$1=\textsf{`ICDT'}}(R)$ is
$\{\textsf{ICDT}\}$-generic.  The other operations of the
classical relational algebra (equality selection,
union, difference, equijoin, projection \cite{ullman})
are even $\emptyset$-generic.

\begin{example} \label{ex1}
  Consider the vocabulary $\V_1$ consisting of a number of unary relation
  names $S_1,\dots,S_k$, and the vocabulary $\V_2$ consisting of
  a single binary relation name $R$.  Pick $k$ distinct constants
  $c_1$, \dots, $c_k$ in $\dom$.  Consider the schema mapping $f$
  from $\V_1$ to $\V_2$ that maps an instance $I$ of $\V_1$ to
  the instance $J$ of $\V_2$ where $J(R)$ equals the result of
  evaluating the following relational algebra expression in $I$:
  \[ S_1 \times \{c_1\} \cup \cdots \cup S_k \times \{c_k\} \]
In words,
  every element is tagged with the identifiers of the unary relations
  it belongs to.  This mapping is $\{c_1,\dots,c_k\}$-generic.
  \qed
\end{example}

Generic mappings are ``data independent'': they make no
assumptions on the data values that can appear in the instances,
beyond the existence of a finite set of constants.  Computations
that do make such assumptions are typically not generic, as
illustrated next.

\begin{example}[\cite{hull_siam}]
  Consider the vocabulary $\V_2$ consisting of a single binary
  relation name $R$, and consider the
  vocabulary $\V_{11}$ consisting of a single unary relation $S$.
  Now assume $\dom$ equals the set of natural numbers, and
  consider the schema mapping $f$ from
  $\V_2$ to $\V_{11}$ that maps an instance $I$ of $\V_2$ 
  to the instance $J$ of $\V_{11}$ where $J(S) =
  \{2^x\cdot 3^y \mid (x,y) \in I(R)\}$.
  This mapping works only over the natural numbers and indeed
  is not generic.
\end{example}

\subparagraph{Notions of dominance}

We say that schema $\schema_1$ is \emph{generically dominated} by
schema $\schema_2$, denoted by $\schema_1 \leqgen \schema_2$, if
there exists an injective and generic schema mapping from
$\sch_1$ to $\sch_2$.  Injectivity ensures that we have a
lossless data transformation from instances of $\schema_1$ into
instances of $\schema_2$.  Genericity ensures that the
transformation is data independent.

\begin{example}
  In Example~\ref{ex1} we have $\V_1 \leqgen \V_2$ since the
  mapping exhibited there is injective. \qed
\end{example}

In order to define absolute dominance we first introduce the
\emph{cardinality function} $\cfun_\sch$ of a schema $\sch$.  For
any finite subset $D \subseteq \dom$, let $\sem{\sch}_D$ be the
set of instances $I$ of $\sch$ such that $\adom(I)\subseteq D$.
Then $\cfun_{\sch}: \Nat \to \Nat$ maps $n$ to the cardinality of
$\sem{\sch}_D$, for some $D$ of cardinality $n$.  This is
independent of the choice of $D$, since keys and INDs
are themselves $\emptyset$-generic.

We now say that $\schema_1$ is \emph{absolutely dominated} by
$\schema_2$, denoted by $\schema_1 \leqabs \schema_2$, if there
exists a number $c$ such that $\cfun_{\sch_1}(n) \leq
\cfun_{\sch_2}(n)$ for $n \geq c$.  Intuitively, for sufficiently
large active domains, there is enough ``room'' in $\sem{\schema_2}$ to
accommodate instances of $\schema_1$.

\begin{example}
  In Example~\ref{ex1} we do \emph{not} have $\V_2 \leqabs
  \V_{1}$, since $\cfun_{\V_2}(n)=2^{n^2}$ but
  $\cfun_{\V_{11}}(n)=2^{kn}$ and $k$ is fixed.
  We will see more interesting examples of no absolute dominance
  in Section~\ref{secbinary}.
\end{example}

If $f$ is $C$-generic, then $\adom(f(I)) \subseteq \adom(I)\cup
C$ \cite[Exercise~16.1]{ahv_book}.  Hence, generic dominance
implies absolute dominance.  Therefore, if we can prove that
there is no absolute dominance, we have also proven that there is
no generic dominance.  For examples of absolute dominance but no
generic dominance, see Hull \cite{hull_siam}.

We will often tacitly use the following property, which is
readily verified:

\begin{proposition}
  Generic dominance and absolute dominance are transitive.
\end{proposition}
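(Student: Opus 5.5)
For generic dominance, I will show that composition preserves both injectivity and genericity. Suppose $f:\sem{\sch_1}\to\sem{\sch_2}$ is injective and $C_1$-generic, and $g:\sem{\sch_2}\to\sem{\sch_3}$ is injective and $C_2$-generic. I take $h=g\circ f$, a total mapping from $\sem{\sch_1}$ to $\sem{\sch_3}$. It is injective because a composition of injective maps is injective.

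For genericity I use the finite set $C=C_1\cup C_2$. Let $\pi$ be a permutation of $\dom$ that is the identity on $C$. Then $\pi$ is in particular the identity on $C_1$ and on $C_2$. So for every instance $I$ of $\sch_1$ we get
\[ h(\pi(I)) = g(f(\pi(I))) = g(\pi(f(I))) = \pi(g(f(I))) = \pi(h(I)). \]
The second equality uses $C_1$-genericity of $f$ applied to $I$. The third uses $C_2$-genericity of $g$ applied to the instance $f(I)$ of $\sch_2$. Hence $h$ is $C$-generic, and so $\sch_1\leqgen\sch_3$.

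The only subtle point is that genericity is defined relative to a fixed set of constants. One must therefore enlarge to a common finite set, and check that $C'$-genericity implies $C$-genericity whenever $C'\subseteq C$. This is immediate, because every permutation fixing $C$ also fixes $C'$.

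For absolute dominance, suppose $\cfun_{\sch_1}(n)\le\cfun_{\sch_2}(n)$ for $n\ge c_1$, and $\cfun_{\sch_2}(n)\le\cfun_{\sch_3}(n)$ for $n\ge c_2$. Then for all $n\ge\max(c_1,c_2)$ we have $\cfun_{\sch_1}(n)\le\cfun_{\sch_3}(n)$, by transitivity of $\le$ on $\Nat$. Neither part involves a real obstacle.
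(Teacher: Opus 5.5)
Your proposal is correct: composing the two injective maps, taking the union $C_1\cup C_2$ as the set of constants, and taking $\max(c_1,c_2)$ as the threshold for the cardinality inequality are exactly the routine checks behind the paper's ``readily verified.'' The paper gives no explicit proof, so your argument simply fills in that standard verification.
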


\subparagraph{Notions of equivalence and strictness}

When $\sch_1 \leqgen \sch_2$ and vice versa, we naturally call
$\sch_1$ and $\sch_2$ \emph{generically equivalent} and denote
this by $\sch_1 \equivgen \sch_2$.   In contrast, when $\sch_1
\leqgen \sch_2$ but not vice versa, we say that $\sch_2$
\emph{strictly} generically dominates $\sch_1$ and denote this by
$\sch_1 \legen \sch_2$.
Strict absolute dominance is defined analogously and denoted by
$\leabs$.

  The well-known Schr\"oder-Bernstein theorem states that, if there exist
  injective mappings $f:A\to B$ and $g:B\to A$ for arbitrary
  (possibly infinite) sets $A$ and $B$, then there exists a
  bijection $h:A\to B$.  It turns out that a generic
  schema-mapping version of this theorem holds:

  \begin{theorem} \label{schroderbernstein}
If $\sch_1 \equivgen \sch_2$, then there exists a generic and
    bijective schema mapping from ${\sch_1}$ to ${\sch_2}$.
  \end{theorem}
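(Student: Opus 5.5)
We follow the classical proof of the Schr\"oder--Bernstein theorem and check that each step of the construction can be carried out generically. Let $f:\sem{\sch_1}\to\sem{\sch_2}$ be injective and $C_f$-generic, and $g:\sem{\sch_2}\to\sem{\sch_1}$ injective and $C_g$-generic. Put $C = C_f\cup C_g$; both maps are then $C$-generic. Define the \emph{ancestry chain} of an instance in the usual way.

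\begin{itemize}
\item For $I\in\sem{\sch_1}$: if $I\in g(\sem{\sch_2})$, its parent is the unique $g^{-1}(I)$. That instance may in turn have a parent $f^{-1}(g^{-1}(I))$, and so on.
\item The chain either stops in $\sem{\sch_1}$, stops in $\sem{\sch_2}$, or is infinite.
\end{itemize}

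Let $A_1\subseteq\sem{\sch_1}$ consist of the instances whose chain stops in $\sem{\sch_1}$ or is infinite, and $A_2 = \sem{\sch_1}\setminus A_1$ those whose chain stops in $\sem{\sch_2}$. Define $h(I)=f(I)$ for $I\in A_1$ and $h(I)=g^{-1}(I)$ for $I\in A_2$. The standard argument shows that $h$ is a bijection $\sem{\sch_1}\to\sem{\sch_2}$. This part is purely set-theoretic and can be cited.

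The only new point is $C$-genericity of $h$. Let $\pi$ be a permutation of $\dom$ fixing $C$. We proceed in three steps.

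\begin{enumerate}
\item \textbf{Inverses commute with $\pi$.} From $f(\pi(I))=\pi(f(I))$ we get $\pi(f(\sem{\sch_1}))=f(\sem{\sch_1})$, since $\pi$ is a bijection on instances with inverse $\pi^{-1}$, which also fixes $C$. Hence $J$ lies in the image of $f$ iff $\pi(J)$ does, and then $f^{-1}(\pi(J))=\pi(f^{-1}(J))$ by injectivity. The same holds for $g$.
\item \textbf{Chains are transported.} By induction on the length of the chain, the ancestry chain of $\pi(I)$ is exactly $\pi$ applied elementwise to the chain of $I$. In particular the chains have the same length and stop on the same side, so $I\in A_1$ iff $\pi(I)\in A_1$.
\item \textbf{Conclusion.} Hence $h(\pi(I))=\pi(h(I))$ in both cases, using genericity of $f$ on $A_1$ and step 1 for $g^{-1}$ on $A_2$.
\end{enumerate}

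So $h$ is a $C$-generic bijection.

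The main obstacle is step 1. Inverting a generic map must again give a generic map, and the image of $f$ must be closed under all $\pi$ fixing $C$. The closure of the image needs $\pi^{-1}$ to fix $C$ as well, which holds because $\pi$ fixes $C$ pointwise.

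One might worry that infinite chains break genericity, but the argument above never uses finiteness: membership in $A_1$ is defined by a property of the chain that is invariant under $\pi$. So no extra care is needed there.
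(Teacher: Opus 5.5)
Your proof is correct and is essentially the paper's argument. The paper encodes $f$ and $g$ as a bipartite graph whose components SH1--SH4 are your ancestry chains: it sets $h=f$ on SH1, SH3 and SH4 (your $A_1$) and $h=g^{-1}$ on SH2 (your $A_2$), and it gets genericity by showing that $\pi$ induces an automorphism of this graph that preserves component shapes, which is your step~2. Your step~1 spells out the commutation $\pi\circ g^{-1}=g^{-1}\circ\pi$, which the paper uses without separate justification.
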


    We prove this theorem in Appendix~\ref{appschroderbernstein}.
Note that the converse implication is immediate,
  since the inverse of a generic bijective schema
  mapping is generic as well.

\subparagraph{Logical implication and equivalence}

A trivial case of $\schema_1 \leqgen \schema_2$ happens when
$\schema_1$ logically implies $\schema_2$, i.e., when
$\sem{\schema_1} \subseteq \sem{\schema_2}$.
Likewise, $\schema_1 \equivgen \schema_2$ trivially holds when
$\schema_1$ and $\schema_2$ are logically equivalent, i.e., when
$\sem{\schema_1} = \sem{\schema_2}$.  We denote this by $\sch_1
\equiv \sch_2$.

\section{Schemas on a single binary relation name}
\label{secbinary}

Our goal in this section is to completely characterize generic
dominance between schemas on a single binary relation name $R$.
Syntactically there are 20 such schemas, as shown in
Table~\ref{tab19}.  The table already indicates the logical
equivalences that hold; these are readily verified.  The table
also indicates three more generic equivalences that hold; these
can be seen just by swapping the columns of $R$.

\begin{table} % JanVdB: changed to booktabs style
    \caption{Schemas on a single binary relation $R$.
    The descriptive labels indicating
    generic graph classes are explained in the text. These labels
    are only given for schemas that are not generically
    equivalent to a schema that comes earlier in the table.}
    \label{tab19}
    \centering
  \begin{adjustbox}{max width=\textwidth,center} % this shrinks the table to fit the page, but only if necessary
  \begin{tabular}{llll}
    \toprule
    schema & keys & INDs & generic graph class \\
    \midrule
 $\schema_0$ & none & none & \digraph \\
 $\schema_1$ & none & $\ind{R}{1}{R}{2}$ & \sourcefree \\
 $\schema_2 (\equivgen \schema_1)$ & none & $\ind{R}{2}{R}{1}$ & \\
 $\schema_3$ & none & $\ind{R}{1}{R}{2}$, $\ind{R}{2}{R}{1}$ &
    \sourcesinkfree \\
 $\schema_4$ & none & $\ind{R}{1, 2}{R}{2, 1}$ & \symm \\
 $\schema_5$ & $\key{R}{1}$ & none & \outdegleo \\
 $\schema_6$ & $\key{R}{1}$ & $\ind{R}{1}{R}{2}$ & \cycles \\
 $\schema_7$ & $\key{R}{1}$ & $\ind{R}{2}{R}{1}$ & \outdego \\
 $\schema_8 (\equiv \schema_{6})$ & $\key{R}{1}$ & $\ind{R}{1}{R}{2}$,
    $\ind{R}{2}{R}{1}$ & \\
 $\schema_9$ & $\key{R}{1}$ & $\ind{R}{1, 2}{R}{2, 1}$ &
    \symmdego \\
 %perfect matching (undirected)
 $\schema_{10} (\equivgen \schema_5)$ & $\key{R}{2}$ & none & \\
 %co-functional digraph
 $\schema_{11} (\equivgen \schema_7)$ & $\key{R}{2}$ &
    $\ind{R}{1}{R}{2}$ & \\
 $\schema_{12} (\equiv \schema_{6})$ & $\key{R}{2}$ &
    $\ind{R}{2}{R}{1}$ & \\
 $\schema_{13} (\equiv \schema_{6})$ & $\key{R}{2}$ &
    $\ind{R}{1}{R}{2}$, $\ind{R}{2}{R}{1}$ & \\
 $\schema_{14} (\equiv \schema_{9})$ & $\key{R}{2}$ & $\ind{R}{1, 2}{R}{2, 1}$
    & \\
 $\schema_{15}$ & $\key{R}{1}$, $\key{R}{2}$ & none &
    \pathscycles \\
 $\schema_{16} (\equiv \schema_{6})$ & $\key{R}{1}$, $\key{R}{2}$ &
    $\ind{R}{1}{R}{2}$ & \\
 $\schema_{17} (\equiv \schema_{6})$ & $\key{R}{1}$, $\key{R}{2}$ &
    $\ind{R}{2}{R}{1}$ & \\
 $\schema_{18} (\equiv \schema_{6})$ & $\key{R}{1}$, $\key{R}{2}$ &
    $\ind{R}{1}{R}{2}$, $\ind{R}{2}{R}{1}$ & \\
 $\schema_{19} (\equiv \schema_{9})$ & $\key{R}{1}$, $\key{R}{2}$ &
    $\ind{R}{1, 2}{R}{2, 1}$ & \\
    \bottomrule
\end{tabular}
\end{adjustbox}
\end{table}

Since a binary relation is a directed graph (without isolated
nodes), the instances of a schema form a class of directed
graphs.  The nine classes that result from the table
can be described using standard graph terminology:
\begin{itemize}
  \item \digraph, all directed graphs (no constraints).
  \item \sourcefree, no source nodes.
  \item \sourcesinkfree, no source or sink nodes.
  \item \symm, edges are symmetric (undirected graphs).
  \item \outdegleo, no node can have more than one outgoing edge.
  \item \cycles, disjoint union of cycles.
  \item \outdego, each node has exactly one outgoing edge.
  \item \symmdego, union of isolated undirected edges (a perfect
    matching).
  \item \pathscycles, disjoint union of directed paths and
    cycles.
\end{itemize}

We will establish:

\begin{theorem} \label{theorbinary}
  There are no other generic equivalences on schemas
  $\sch_1$ to $\sch_{19}$ than those implied in Table~\ref{tab19}.
  Moreover, the partial order resulting from generic dominance 
  over the nine equivalence classes, is exactly as shown in the
  Hasse diagram of Figure~\ref{fighasse}.
\end{theorem}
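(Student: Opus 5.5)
The plan is to prove Theorem~\ref{theorbinary} in two halves. First I establish every dominance edge of Figure~\ref{fighasse} by exhibiting injective generic mappings. Then I establish every non-edge (no dominance) by a counting argument, or, where counting is not enough, by an automorphism argument in the style of Hull~\cite{hull_siam}. The generic equivalences of Table~\ref{tab19} are already accounted for: the logical equivalences by direct verification, and the three extra ones by swapping the columns of $R$. Once the Hasse diagram is shown to be a strict partial order on nine classes, it follows automatically that no further generic equivalences exist.

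\textbf{Positive direction.} Most edges are plain inclusions of graph classes, so the identity mapping works. Examples are $\symmdego \subseteq \cycles$ (two opposite edges form a $2$-cycle), $\symmdego \subseteq \symm \subseteq \sourcesinkfree \subseteq \sourcefree \subseteq \digraph$, and $\cycles \subseteq \outdego \subseteq \outdegleo$ and $\cycles \subseteq \pathscycles \subseteq \outdegleo$. For any remaining edge not given by inclusion, I build an explicit first-order ($\emptyset$- or $C$-generic) transformation. The typical device is to tag elements with a few constants, as in Example~\ref{ex1}, or to add self-loops or reverse edges so that the constraints of the target schema become true while the original graph stays recoverable. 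For instance, to embed a class into a source-free or symmetric class, I encode each edge $(a,b)$ by gadget edges through fixed constants that make every node a target, and I check injectivity by decoding. Transitivity then gives every dominance implied by the diagram.

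\textbf{Negative direction: counting.} For each pair not comparable in the diagram (in the relevant direction), I first try to refute absolute dominance. I compute the cardinality functions asymptotically.
\begin{itemize}
\item The classes $\digraph$, $\sourcefree$ and $\sourcesinkfree$ have $\cfun$ of order $2^{n^2}$, up to lower-order factors that must be tracked carefully.
\item $\symm$ has $\cfun$ of order $2^{\Theta(n^2/2)}$.
\item The functional classes ($\outdegleo$, $\outdego$, $\pathscycles$, $\cycles$, $\symmdego$) have $\cfun$ of order $2^{\Theta(n\log n)}$, distinguished by their precise constants: roughly $(n+2)^n$ against $(n+1)^n$ against $n!\cdot e^{O(n)}$ against $(n/e)^{n/2}$, after summing over active domains $\subseteq D$.
\end{itemize}
This immediately separates the quadratic-exponent classes from the $n\log n$ ones. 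For example, $\cycles \not\geqgen \symm$, since $n!\,c^n \ll 2^{n^2/2}$. It also separates several classes inside each group by comparing the leading terms.

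\textbf{Negative direction: automorphisms (main obstacle).} The hard cases are those where absolute dominance holds but generic dominance should fail. Examples are $\cycles \not\leqgen \symm$, and the separations among $\sourcefree$, $\sourcesinkfree$ and $\digraph$, whose counts are nearly equal. For these I use automorphisms of instances. Suppose $f$ is $C$-generic, and take instances whose active domain avoids $C$. Any automorphism of $I$ fixing $C$ must then be an automorphism of $f(I)$, and any isomorphism $\pi : I \cong I'$ fixing $C$ satisfies $f(I')=\pi(f(I))$.

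For $\cycles$ into $\symm$, I take a directed cycle $c$ of length $n\ge 3$ and the reflection $\sigma$ with $\sigma(c)=c^{-1}$. Then $f(c)$ restricted to $\adom(c)$ is invariant under the rotation group $\mathbb{Z}_n$. An undirected graph invariant under rotation is circulant, hence also invariant under $\sigma$, and the same holds for the edges to constants in $C$, since these are unions of rotation orbits. Therefore $f(c^{-1})=\sigma(f(c))=f(c)$, contradicting injectivity.

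I expect analogous but more delicate arguments for the separations inside the dense group. The plan there is to choose highly symmetric instances (complete graphs, or tournaments with transitive automorphism groups) and count the number of pairwise non-isomorphic instances with a given automorphism group against the number of target instances having that group as a subgroup of their automorphisms. Setting up the right symmetric families for each remaining non-edge is where I anticipate most of the work.
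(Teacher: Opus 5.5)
Your proposal has two genuine gaps, one on each side.

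\textbf{Positive direction.} You never say which edges of the diagram are not inclusions, and you construct nothing for them. The crucial one is $\outdegleo \leqgen \sourcesinkfree$, the one non-inclusion dominance the paper proves explicitly. Your inclusion chains leave \outdegleo\ sitting directly below \digraph, so you would end up trying to refute a dominance that actually holds.

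The paper's map is the $\emptyset$-generic expression $R \cup \pi_{1,1}(\pi_1(R)-\pi_2(R)) \cup (\pi_2(R)-\pi_1(R))\times(\pi_1(R)-\pi_2(R))$. It adds a loop at every source and an edge from every sink to every source. Under outdegree $\le 1$ a sink forces a source, so the output is source- and sink-free. Injectivity hinges on outdegree $\le 1$:
\begin{itemize}
\item a pre-existing loop is the only outgoing edge of its node;
\item an added loop sits on a former source, which keeps its original outgoing edge;
\item once the former sources are known, the former sinks are the other nodes with an edge into one of them.
\end{itemize}
The gadgets you suggest (tagging with constants, embedding into a symmetric class) point the wrong way: nothing except \symmdego\ lies below \symm.

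There is a second missing edge. The paper's chain is $\cycles \to \pathscycles \to \outdego \to \outdegleo$, whereas you list \outdego\ and \pathscycles\ in parallel. The dominance $\pathscycles \leqgen \outdego$ is not an inclusion either, a point the paper's text also passes over. A map is to add a loop at every sink; the added loops are exactly those on nodes of in-degree $2$. Its strictness comes from counting: $\cfun_7(n) \ge n^n$ against $\cfun_{15}(n) = n!\,e^{O(\sqrt n)}$.

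\textbf{Negative direction.} Here you misdiagnose the obstacle. The separations among \sourcesinkfree, \sourcefree\ and \digraph\ are \emph{not} cases where absolute dominance holds.

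Absolute dominance is a pointwise inequality for all large $n$. Along every inclusion edge $\sch \to \sch'$ one has $\sem{\sch}_D \subsetneq \sem{\sch'}_D$ for all large $D$, hence $\cfun_{\sch}(n) < \cfun_{\sch'}(n)$. That refutes $\sch' \leqabs \sch$, however close the two counts are asymptotically. This is the paper's strong-inclusion argument (Lemma~\ref{lemstrong}).

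It settles strictness of all inclusion edges at once. That includes \cycles\ versus \pathscycles\ and \outdego\ versus \outdegleo. For these, your rough estimates cannot separate the counts, which differ only by a sub-exponential factor ($\lfloor e\,n!\rfloor$ versus $n!\,e^{O(\sqrt n)}$) or a constant factor ($\sim e^{1/e}n^n$ versus $(n+1)^n$). As written, these strictness claims rest on unspecified automorphism arguments and are unproved, although no such arguments are needed.

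The rest is fine and matches the paper. Your quadratic-versus-$n\log n$ count gives $\outdegleo \lestrong \symm \substrong \sourcesinkfree$. This yields strictness of $\outdegleo \to \sourcesinkfree$, and also shows that no class on the chain dominates \symm. Your rotation/reflection argument that \symm\ does not dominate \cycles\ is correct and is the paper's own: a $3$-cycle, with the transposition $(b\ c)$ playing the reflection. By transitivity it covers the whole chain.
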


The proof of this theorem is threefold.  First, we must prove all generic
dominances shown as edges in the Hasse diagram.  Second, we must
prove that these dominances are strict, i.e., that there are no
generic dominances on the inverse edges.  Third, for any two
classes shown as incomparable in the diagram, we must show that
in both directions there is no generic dominance.

\paragraph*{Dominance}

To show generic dominance along the edges of the diagram, 
we only have to deal with $\outdegleo \to \sourcesinkfree$.
Indeed, for all other edges, logical implication holds, so
generic dominance is immediate.

So we prove $\outdegleo \leqgen \sourcesinkfree$.  Consider the
mapping $m$ that maps an instance $I$ of $\outdegleo$ to the instance $J$ of
$\sourcesinkfreebreak$ where $J(R)$ equals the result of the following
relational algebra expression evaluated in $I$:
\[ R \cup \pi_{1,1}(\pi_1(R)-\pi_2(R)) \cup
(\pi_2(R)-\pi_1(R)) \times (\pi_1(R)-\pi_2(R)) \]
In words, $m$ adds a self-loop to all sources of $I$, and adds
edges from all sinks of $I$ to all sources of $I$.  Denote by
$e_1$ the expression that adds the self-loops, and by $e_2$ the
expression that adds the edges from sinks to sources.

Since there is only one relation name $R$, we can
abbreviate $I(R)$ by $I$ and $m(I)(R)$ by $m(I)$.
Since $I$, $e_1(I)$, and $e_2(I)$ are mutually disjoint
and $m(I)$ equals their union, to show injectivity of $m$ it
suffices to show that $e_1(I)$ and $e_2(I)$ can be uniquely
determined from $m(I)$.

For $e_1(I)$ this is clear, because the nodes on self-loops that
were already present in $I$ are neither sources nor sinks. Hence,
by outdegree${} \leq 1$, these nodes do not have any other
outgoing edges in $I$, and neither in $m(I)$.  In contrast, the
nodes on self-loops added by $e_1$ do have another outgoing edge
in $m(I)$.

Having identified the original sources from $I$ in $m(I)$,
in order to determine $e_2(I)$, it
remains to identify the original sinks.  Let $(y,x) \in m(I)$
with $x$ a source in $I$.  Then $(y,x) \notin I$, so $(y,x)$
was added by $e_2$. Hence, the sinks of $I$ are precisely those 
nodes that have an edge in $m(I)$ to a source in $I$.

\paragraph*{Strong inclusion, dominance, and strictness}

To show that the generic dominances along the edges of the
diagram are strict, we first consider all edges except
$\outdegleo \to \sourcesinkfree$.  For each such edge $\sch \to
\sch'$ we already noted that $\sem{\sch}$ is included in
$\sem{\sch'}$, but there is more: for any sufficiently large
finite domain $D \subseteq \dom$, we have the strict inclusion
$\sem{\sch}_D \subset \sem{\sch'}_D$.  We call this 
dominance relationship \emph{strong inclusion} and denote it by
$\sch \substrong \sch'$.

For our proofs we also need a strict version of absolute
dominance.  We say that $\sch$ is \emph{strongly dominated} by
$\sch'$, denoted by $\sch \lestrong \sch'$, if there exists a
number $c$ such that $\cfun_{\sch}(n) < \cfun_{\sch'}(n)$ for $n
\geq c$.  We note three immediate observations:

\begin{lemma} \label{lemstrong}
  \begin{enumerate}[(i)]
    \item
  Strong inclusion implies strong dominance.
\item
  Strong dominance implies absolute dominance.
\item
  Strong dominance is transitive.
\item
  If $\sch \lestrong \sch'$ then $\sch' \leqabs \sch$ does not hold.
  \end{enumerate}
\end{lemma}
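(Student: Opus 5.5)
Each of the four items follows by unfolding the definitions of $\substrong$, $\lestrong$ and $\leqabs$ in terms of the cardinality function $\cfun$; no earlier result beyond these definitions is needed. I treat them in order.

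For (i), assume $\sch \substrong \sch'$. Then there is a threshold $c$ such that for every finite $D \subseteq \dom$ with $|D| \geq c$ we have $\sem{\sch}_D \subset \sem{\sch'}_D$, and the inclusion is strict. These sets are finite, since only finitely many relations over the finite set $D$ exist. A strict inclusion of finite sets gives a strict inequality of cardinalities. Fix any $D$ with $|D| = n \geq c$. Because $\cfun$ does not depend on the choice of $D$, as noted when $\cfun$ was defined, we get $\cfun_{\sch}(n) = |\sem{\sch}_D| < |\sem{\sch'}_D| = \cfun_{\sch'}(n)$. This is exactly the definition of $\sch \lestrong \sch'$, with the same threshold $c$.

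For (ii), note that $\cfun_{\sch}(n) < \cfun_{\sch'}(n)$ for all $n \geq c$ implies $\cfun_{\sch}(n) \leq \cfun_{\sch'}(n)$ for all $n \geq c$. That is the definition of $\sch \leqabs \sch'$.

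For (iii), suppose $\sch \lestrong \sch'$ with threshold $c_1$ and $\sch' \lestrong \sch''$ with threshold $c_2$. Take $c = \max(c_1, c_2)$. For every $n \geq c$ both strict inequalities hold, and chaining them gives $\cfun_{\sch}(n) < \cfun_{\sch''}(n)$.

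For (iv), argue by contradiction. Suppose $\sch \lestrong \sch'$ with threshold $c$ and also $\sch' \leqabs \sch$ with threshold $c'$. Pick any $n \geq \max(c, c')$. Then $\cfun_{\sch'}(n) \leq \cfun_{\sch}(n) < \cfun_{\sch'}(n)$, which is impossible. The only point that needs any care is this step of taking a common threshold, so that both eventual conditions hold at the same $n$. Everything else is routine; the only substantive fact used is the independence of $\cfun$ from the choice of $D$, which item (i) relies on.
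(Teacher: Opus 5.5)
Your proof is correct and matches the paper's approach: the paper gives no separate argument and simply calls these immediate observations that follow from unfolding the definitions of $\substrong$, $\lestrong$ and $\leqabs$. You supply the routine details it leaves implicit, namely finiteness of $\sem{\sch}_D$ and independence of $\cfun$ from the choice of $D$ in (i), and a common threshold in (iii) and (iv).
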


From (i) and (iv) above we can already conclude that all edges
in the diagram, except for $\outdegleo \to \sourcesinkfree$,
represent a strict dominance.

It thus remains to prove that $\sourcesinkfree \leqgen
\outdegleo$ does not hold.  Thereto we will prove the following about
the class \symm:

\begin{lemma} \label{lemf4f5}
\symm\ strongly dominates \outdegleo.
\end{lemma}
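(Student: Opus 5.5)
The plan is to compute both cardinality functions exactly and compare them for large $n$; strong dominance is, by definition, just the eventual strict inequality $\cfun_{\outdegleo}(n) < \cfun_{\symm}(n)$. Throughout, fix a finite $D \subseteq \dom$ with $|D| = n$. Since $\sem{\sch}_D$ consists of instances with active domain \emph{contained} in $D$, we count all relations $R \subseteq D \times D$ satisfying the constraints. Isolated nodes do not matter here, because the active domain is only required to be a subset of $D$.

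For \symm\ (schema $\schema_4$, constraint $\ind{R}{1,2}{R}{2,1}$), a relation on $D$ is symmetric exactly when it is determined by choosing, independently for each unordered pair $\{x,y\}$ with $x,y \in D$, whether the edges between $x$ and $y$ are present. This includes the case $x=y$, since self-loops are allowed by the IND. There are $\binom{n}{2} + n = n(n+1)/2$ such pairs, so
\[ \cfun_{\symm}(n) = 2^{n(n+1)/2}. \]
For \outdegleo\ (schema $\schema_5$, key $\key R1$), each $x \in D$ independently either has no outgoing edge or has exactly one outgoing edge to one of the $n$ elements of $D$. This gives
\[ \cfun_{\outdegleo}(n) = (n+1)^n = 2^{n \log_2(n+1)}. \]

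It then remains to show $n\log_2(n+1) < n(n+1)/2$ for all sufficiently large $n$. This is equivalent to $\log_2(n+1) < (n+1)/2$, i.e.\ $\log_2 m < m/2$ for $m = n+1$. That inequality holds for all $m > 4$: at $m=5$ we have $\log_2 5 < 2.5$, and from there the left side grows more slowly than the right. One can also check directly that at $n=4$ the values are $625 < 1024$. Taking $c = 4$ in the definition of $\lestrong$ then gives $\outdegleo \lestrong \symm$.

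There is no real obstacle in this argument. The only points needing care are the counting conventions: self-loops must be counted for \symm, the ``no outgoing edge'' option must be counted for \outdegleo, and one must use the convention that $\sem{\sch}_D$ includes instances whose active domain is a proper subset of $D$. Either way the gap between $2^{\Theta(n^2)}$ and $2^{\Theta(n\log n)}$ is so large that the conclusion does not depend on these details.
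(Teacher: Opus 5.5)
Your proof is correct and is essentially the paper's argument. The paper likewise takes $\cfun_4(n)=2^{\binom{n}{2}+n}$ and $\cfun_5(n)=(n+1)^n$ and observes that $\cfun_4(n)/\cfun_5(n)\geq 2^{n(n+1)/2-n\log_2(n+1)}$; you additionally make the threshold explicit ($c=4$, with $625<1024$ at $n=4$, while $n=3$ gives equality $64=64$).
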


This lemma gives what we want.
Indeed, we also have $\symm \substrong \sourcesinkfree$.  Then by
transitivity, \[\outdegleo \lestrong \sourcesinkfree,\] so that
\outdegleo\ does not absolutely, whence not generically, dominate
$\sourcesinkfree$, as desired.

It now remains to prove Lemma~\ref{lemf4f5}.  In order to prove
this and other strong domination results, we made the following
effort:

\begin{proposition} \label{propcard}
  Table~\ref{tabcard} contains the cardinality functions for the
  nine generic graph classes of Table~\ref{tab19}.
\end{proposition}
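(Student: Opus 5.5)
The plan is to derive each cardinality function by counting, for a fixed domain $D$ with $|D|=n$, the relations $R\subseteq D\times D$ in the class. Here $\cfun_\sch(n)=|\sem{\sch}_D|$ counts all instances whose active domain lies within $D$, including the empty instance. The unifying device is to sum over the set $A\subseteq D$ of nodes actually used and count structures with active domain exactly $A$. This gives $\cfun(n)=\sum_{k}\binom nk g(k)$, where $g(k)$ is the number of graphs in the class on a labelled $k$-set with no isolated nodes. Where a closed form for $\cfun$ is easier to get directly, I will count directly instead.

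For the unconstrained and purely local classes, direct counting gives closed forms.
\begin{itemize}
\item \digraph: $\cfun(n)=2^{n^2}$.
\item \symm: choose any set of unordered pairs, loops allowed, so $\cfun(n)=2^{n(n+1)/2}$.
\item \outdegleo: each node of $D$ independently has no outgoing edge or one of $n$ targets, so $\cfun(n)=(n+1)^n$.
\item \outdego: functions on a chosen subset $A$ that map $A$ into $A$, so $\cfun(n)=\sum_k\binom nk k^k$.
\item \cycles: permutations of a subset $A$, each node lying on a cycle with the fixed point giving a self-loop, so $\cfun(n)=\sum_k\binom nk k!$.
\item \symmdego: involutions of $A$ whose fixed points are loops. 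This gives the involution numbers, summed over subsets, which equals $\sum_k \binom nk I(k)$ where $I(k)=\sum_j \binom k{2j}(2j-1)!!$.
\item \pathscycles: partial injections on $D$, i.e.\ key on both columns, so $\cfun(n)=\sum_k \binom nk^2 k!$, choosing the domain, the range and a bijection between them.
\end{itemize}

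For \sourcefree\ and \sourcesinkfree\ I will use inclusion–exclusion. For \sourcefree\ with active domain exactly $A$ ($|A|=k$), every node must have an incoming edge. The number of relations on $A$ where every column-2 projection covers $A$ is $(2^k-1)^k$: for each target, choose a nonempty set of sources. Such a relation automatically has active domain $A$, so $\cfun(n)=\sum_k\binom nk(2^k-1)^k$. For \sourcesinkfree\ we need both projections equal to $A$, that is, every row and column of the $k\times k$ 0/1 matrix nonempty. Inclusion–exclusion over empty rows gives $g(k)=\sum_{j}(-1)^j\binom kj(2^{k-j}-1)^k$, hence $\cfun(n)=\sum_k\binom nk\sum_j(-1)^j\binom kj(2^{k-j}-1)^k$.

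The main obstacle is bookkeeping rather than ideas. I have to consistently decide whether the table counts instances with active domain contained in $D$ (summing over subsets) or exactly $D$, and match whichever closed forms Table~\ref{tabcard} uses. I would verify each formula against small cases ($n=1,2$) by hand enumeration. The \sourcesinkfree\ inclusion–exclusion and the involution count are where off-by-one errors are most likely, so those get the most careful check.
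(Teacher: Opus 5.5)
Your proposal is correct and follows essentially the same route as the paper. In both, you condition on the active domain (the set $R[1]$, $R[2]$, or both, as the constraints dictate) and then count the admissible structures on it: arbitrary or symmetric relations, functions, permutations, involutions, partial bijections, or $0/1$ matrices with no empty row or column via inclusion–exclusion. Your formulas, including $2^{n(n+1)/2}=2^{\binom{n}{2}+n}$ and your expression for the involution numbers, agree with Table~\ref{tabcard}.
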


\begin{table}
\caption{Cardinality functions for the schemas from
  Table~\ref{tab19}.  OEIS refers to the corresponding entry in
  the on-line encyclopedia of integer sequences \cite{oeis}.  To
  date, it appears there is no entry in OEIS corresponding directly to
  $\cfun_3$.
In $\cfun_3$, $\mathbf{a}_k$ equals the number of $k \times
    k$~boolean matrices with no zero-row nor zero-column or,
    equivalently, the number of edge covers in the complete
    bipartite graph $\mathbf{K}_{k, k}$.
    In $\cfun_9$, 
$\mathbf{I}_k$ equals the number of involutions
    on a set with $k$ elements.}
  \label{tabcard}
\centering
\begin{adjustbox}{max width=\textwidth,center} % this shrinks the table to fit the page, but only if necessary
\renewcommand{\arraystretch}{1.45}
\begin{tabular}{llll} % JanVdB: converted to booktabs style
\toprule
schema & generic graph class & cardinality function & OEIS \\
\midrule
$\schema_0$ & \digraph & $\cfun_0 : n \mapsto 2^{n^2}$ & A002416\\
$\schema_1$ & \sourcefree &
    $\cfun_1 : n \mapsto \sum_{k=0}^n \binom{n}{k} (2^{k}-1)^k$ & A251182 \\
$\schema_3$ & \sourcesinkfree &
    $\cfun_3 : n \mapsto \sum_{k=0}^n \binom{n}{k} \mathbf{a}_k$ & \\

$\schema_4$ & \symm & $\cfun_4 : n \mapsto 2^{\binom{n}{2} + n}$ & A006125 \\
$\schema_5$ & \outdegleo &
$\cfun_5 : n \mapsto \sum_{k=0}^n \binom{n}{k} n^k = (n + 1)^n$ & A000169\\
  $\schema_6$ & \cycles &
  $\cfun_{6} : n \mapsto \sum_{k=0}^n \binom{n}{k} k!$ & A000522
  \\
$\schema_7$ & \outdego &
    $\cfun_7 : n \mapsto \sum_{k=0}^n \binom{n}{k} k^k$ & A086331 \\
$\schema_{9}$ & \symmdego &
    $\cfun_{9} : n \mapsto \sum_{k=0}^n \binom{n}{k} \mathbf{I}_k$ & A000085\\
$\schema_{15}$ & \pathscycles &
    $\cfun_{15} : n \mapsto \sum_{k=0}^n \binom{n}{k}^2 k!$ & A002720 \\
    \bottomrule
\end{tabular}
\end{adjustbox}
\end{table}

The proof of the above Proposition is in Appendix~\ref{appcard}.
Recall that all schema equivalences in Table~\ref{tab19} are
either logical equivalence or symmetry by swapping the two columns of
$R$.  Since these do not change the cardinality function,
Table~\ref{tabcard} effectively lists the cardinality functions 
of all schemas from Table~\ref{tab19}.

With the concrete cardinality functions in hand, we can now
verify Lemma~\ref{lemf4f5}.  Indeed, from the expressions we see
that $\cfun_4(n)/\cfun_5(n) \geq 2^{n(n+1)/2 - n\log_2 (n+1)}$
from which the claim follows.

\paragraph*{Incomparable pairs}

The pairs of schemas indicated as incomparable in the Hasse
diagram are \symm\ with any of the four classes on the chain
$\cycles \to \pathscycles \to \outdego \to \outdegleo$, and vice
versa.  To show that neither of the four classes on this chain
generically dominates \symm, it suffices to do this for the
largest one, \outdegleo.  However, this already follows from
Lemma~\ref{lemf4f5}.

To show, conversely, that \symm\ does not generically dominate
any of the four classes on the chain, it suffices to do this for
the smallest class, \cycles.  Note that Lemma~\ref{lemf4f5}
implies that \symm\ strongly dominates \cycles.  Yet, we can show:

\begin{lemma}
  \symm\ does not generically dominate \cycles.
\end{lemma}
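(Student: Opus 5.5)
The plan is an automorphism argument in the style of Hull, not a counting argument: Lemma~\ref{lemf4f5} shows that \symm\ has far more instances than \cycles, so cardinality cannot settle this case. Suppose, for contradiction, that $f$ is an injective $C$-generic schema mapping from \cycles\ to \symm\ for some finite $C \subseteq \dom$. Choose $n \geq 3$ distinct values $d_0,\dots,d_{n-1} \in \dom - C$, and let $I$ be the single directed cycle $\{(d_i,d_{i+1}) \mid i \in \mathbb{Z}_n\}$, with indices taken mod~$n$. Recall that $\adom(f(I)) \subseteq \adom(I) \cup C$. Hence $f(I)$ is a symmetric binary relation on $D \cup C$, where $D=\{d_0,\dots,d_{n-1}\}$.

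First, I will show that $f(I)$ is invariant under rotation. Let $\rho$ be the permutation of $\dom$ that maps $d_i \mapsto d_{i+1}$ and is the identity elsewhere. Then $\rho$ fixes $C$ and $\rho(I)=I$. By genericity, $\rho(f(I)) = f(\rho(I)) = f(I)$.

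Second, I will translate this invariance into a circulant description of $f(I)$:
\begin{itemize}
  \item Whether $(d_i,d_j)\in f(I)$ depends only on $j-i \bmod n$. So there is a set $S\subseteq \mathbb{Z}_n$ with $(d_i,d_j)\in f(I)$ iff $j-i\in S$. Symmetry of $f(I)$ gives $S=-S$.
  \item For each $c\in C$, the presence of $(c,d_i)$ (equivalently $(d_i,c)$) does not depend on $i$.
  \item The edges inside $C$ are untouched by any permutation fixing $C$.
\end{itemize}
The case $0\in S$, i.e.\ self-loops on the $d_i$, is covered uniformly by the first item.

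Third, I will use the reflection. Let $\sigma$ map $d_i\mapsto d_{-i}$ and be the identity elsewhere. By the circulant description, $\sigma$ maps $(d_i,d_j)$ to $(d_{-i},d_{-j})$, whose difference $-(j-i)$ lies in $S$ exactly when $j-i$ does. It also preserves the edges to $C$ and fixes the edges inside $C$. So $\sigma(f(I))=f(I)$. But $\sigma(I)$ is the reversed cycle $\{(d_{i+1},d_i)\}$, which differs from $I$ because $n\geq 3$. Since $\sigma$ fixes $C$, genericity gives $f(\sigma(I))=\sigma(f(I))=f(I)$, contradicting injectivity.

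The main obstacle is the second step. One must check carefully that rotation-invariance together with symmetry forces reflection-invariance, including the edges to the constants in $C$ and the possible self-loops. This is where the undirectedness of \symm\ is essential: a directed graph can tell a cycle apart from its reverse, but a rotation-invariant undirected graph cannot.
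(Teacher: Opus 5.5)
Your proof is correct and is essentially the paper's argument. The paper takes the 3-cycle on $a,b,c\notin C$ and uses rotation-invariance plus symmetry of $J=m(I)$ to show that the transposition $(b\ c)$ is an automorphism of $J$ but not of $I$; this transposition is exactly your reflection for $n=3$, and the conclusion likewise contradicts genericity plus injectivity. Your circulant description simply generalizes the paper's case analysis to arbitrary $n\geq 3$, although $n=3$ alone already suffices.
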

\begin{proof}
  As we did earlier, for instances $I$, we abbreviate $I(R)$ by $I$.  
  Assume, for the sake of contradiction, that there exists a
  $C$-generic injective schema mapping $m$ from \cycles\ to
  \symm.  Let $I$ be the instance $\{(a,b),(b,c),(c,a)\}$ where
  $a,b,c \notin C$.  Note that the cyclic permutation $\sigma=(a\ b\ c)$
  is a $C$-automorphism of $I$ (meaning that $\sigma$ is the
  identity on $C$ and that $\sigma(I)=I$).  However, the transposition
  $\tau=(b \ c)$ is not an automorphism of $I$.

  Let $J=m(I)$.  Since $m$ is $C$-generic and injective, $I$ and
  $J$ have exactly the same $C$-automorphisms.  So, if we can
  show that $\tau$ is an automorphism of $J$, we have
  arrived at the desired contradiction.  Note that $\sigma$ 
  is an automorphism of $J$ because it is an automorphism of $I$.

  We begin by showing $\tau(J) \subseteq J$.  Let $(x,y) \in J$.
  Since $m$ is $C$-generic, $x$ and $y$ must belong to $\{a,b,c\}
  \cup C$.  We consider the following possibilities for the pair
  $(x,y)$.
  \begin{enumerate}
    \item
      $x$ and $y$ are distinct and both belong to $\{a,b,c\}$.
      Then $\sigma(J)\subseteq J$ and the symmetry of $J$ imply that $J$
      contains \emph{all} such pairs.  Hence, $\tau(x,y) \in J$.
    \item
      $x=y$ and $x \in \{a,b,c\}$.  Then
      $\sigma(J)\subseteq J$ implies that again $J$ contains all
      such pairs.  Hence, $\tau(x,y) \in J$.
    \item
      One of $x$ and $y$ is in $C$, and the other is in
      $\{a,b,c\}$.  Again, 
      $\sigma(J)\subseteq J$ and the symmetry of $J$ imply that $J$
      contains all such pairs, so $\tau(x,y) \in J$.
    \item
      $x$ and $y$ are both in $C$.  Then $\tau(x,y)=(x,y)$.
  \end{enumerate}

  We can follow the exact same analysis to show that also
  conversely, $J \subseteq \tau(J)$.
  We conclude that $\tau$ is an automorphism of $J$, as desired.
\end{proof}

\section{The ternary case} \label{secternary}

\newcommand\T{\mathcal T}

In this section, we push things a little further by moving
to schemas on a single \emph{ternary} relation
name $R$.  However, to keep things tractable, we only consider
key constraints, as adding INDs would lead to hundreds of schemas
to consider.  Since we only have keys, and a key is a subset of
the columns $\{1,2,3\}$, a schema $\T$ in this setting
is essentially a subset of the powerset of $\{1,2,3\}$.
To avoid logical equivalences, we can restrict attention to
Sperner families (i.e., no set in $\T$ contains another set in $\T$).
Moreover, we do not list a schema if it is symmetric (through a
permutation of the columns) to a schema already listed earlier.
This leads us to eight schemas, shown in Table~\ref{tab3}.

\begin{table}
  \caption{Schemas with only key constraints over a single ternary
  relation name $R$.}
  \label{tab3}
  \centering
  \begin{tabular}{ll}
    \toprule
    schema & keys \\
    \midrule
    $\T_0$ & none \\
    $\T_1$ & $\{1\}$ \\
    $\T_2$ & $\{1\}$, $\{2\}$ \\
    $\T_3$ & $\{1\}$, $\{2\}$, $\{3\}$ \\
    $\T_4$ & $\{1,2\}$ \\
    $\T_5$ & $\{1,2\}$, $\{1,3\}$ \\
    $\T_6$ & $\{1,2\}$, $\{1,3\}$, $\{2,3\}$ \\
    $\T_7$ & $\{1\}$, $\{2,3\}$ \\
    \bottomrule
  \end{tabular}
\end{table}

We note the following two chains of strong inclusions:
\[ \T_3 \substrong \T_2 \substrong \T_7 \substrong \T_1 \quad
\text{and} \quad
\T_6 \substrong \T_5 \substrong \T_4 \substrong \T_0 \]
Hence, if we can prove that $\T_1 \leqgen \T_6$ but not vice versa,
we have established that the Hasse diagram is just one long
chain:
\begin{theorem}
  The Hasse diagram for generic dominance over the schemas in
  Table~\ref{tab3} is as follows:
  \[ \T_3 \to \T_2 \to \T_7 \to \T_1 \to \T_6 \to \T_5 \to \T_4
  \to \T_0
  \]
\end{theorem}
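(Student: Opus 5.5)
The plan follows the reduction already made before the statement: the two chains of strong inclusions give all dominances inside each chain, and their strictness, via Lemma~\ref{lemstrong}(i),(iv). So it suffices to show (a) $\T_1 \leqgen \T_6$ and (b) $\T_6 \not\leqabs \T_1$, hence $\T_6 \not\leqgen \T_1$. Given these, the whole list is totally ordered by $\leqgen$. Every adjacent step is strict, so no two schemas are equivalent: if $\T_j \leqgen \T_i$ held with $\T_i$ earlier in the chain, transitivity would make all schemas between them equivalent, contradicting strictness of an adjacent step. Hence the Hasse diagram is exactly the displayed chain.

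\textbf{Step (b), counting.} I will show $\T_1 \lestrong \T_6$ using Lemma~\ref{lemstrong}(iv).
\begin{itemize}
\item An instance of $\T_1$ over $D$ with $|D|=n$ is a partial function $D \to D^2$, so $\cfun_{\T_1}(n) = (n^2+1)^n = 2^{O(n\log n)}$.
\item An instance of $\T_6$ is a set of triples in which any two coordinates determine the third, i.e.\ a partial Latin square. Every Latin square on $D$ is such an instance. By the van der Waerden/Egorychev lower bound, the number of $n\times n$ Latin squares is at least $(n!)^{2n}/n^{n^2}$, which is $2^{\Omega(n^2\log n)}$.
\item A cruder bound also suffices if needed. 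Take a fixed Latin square and choose an arbitrary subset of its $n^2$ cells; every such subset is still an instance of $\T_6$, giving at least $2^{n^2}$ instances.
\end{itemize}
Either bound gives $\cfun_{\T_1}(n) < \cfun_{\T_6}(n)$ for large $n$, so $\T_6 \not\leqabs \T_1$.

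\textbf{Step (a), an explicit encoding.} This is where I expect the real difficulty. An instance of $\T_1$ is a partial function $a \mapsto (b,c)$, and it must be encoded as a set of triples in which every pair of columns is a key.
\begin{itemize}
\item The diagonal pattern $(a,a,b)$ safely encodes one function $a\mapsto b$: every pair of columns contains the value $a$, which determines the tuple.
\item A second family encodes $a\mapsto c$, for instance $(c,a,a)$ or $(a,c,c)$. Its tuples are distinguishable from the first family by their equality pattern (which columns coincide), so decoding should be easy.
\item The obstacle is key violations between the two families. For example, $(a,a,b)$ and $(a,b,b)$ coming from $a\mapsto b$ and $b\mapsto(\cdot,a)$ agree on columns $\{1,3\}$. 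Degenerate cases $a=b$, $a=c$ or $b=c$ also make patterns collide.
\end{itemize}

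First fix: use a finite set $C$ of constants as tags. Avoid putting a constant in a column together with a non-injective value, since that breaks the keys $\{1,3\}$ and $\{2,3\}$. Instead, put the tag into a tuple only together with the injective coordinate $a$, say $(a,k_1,b)$ and $(k_2,a,c)$. Then each key pair contains $a$ or is resolved by the tag position, with degenerate equalities handled by extra tags. I would then check the three keys case by case on equality patterns, and verify injectivity by reading off $b$ and $c$ from the tag positions.

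If tags in those positions still clash, the fallback is an injective generic mapping built from first-order definable pieces (separating the ``all distinct'' tuples from the degenerate ones, as in the $\outdegleo \leqgen \sourcesinkfree$ construction), together with Theorem~\ref{schroderbernstein}-style reasoning to patch collisions. Designing a clash-free encoding is the step I expect to consume most of the effort.
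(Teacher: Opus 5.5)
There is a genuine gap in step (a). Your reduction to (a) and (b), and your counting argument for (b), are correct and match the paper: partial Latin squares with the van der Waerden/Egorychev bound. Your remark that the $2^{n^2}$ subsets of one fixed Latin square already suffice is a nice simplification. The problem is $\mathcal{T}_1 \leqgen \mathcal{T}_6$, the only nontrivial dominance in the theorem. You give no mapping, and the tagged mapping you sketch fails.

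With $(a,k_1,b)$, columns $\{2,3\}$ carry $(k_1,b)$. So two key values $a\neq a'$ with the same second component $b$ give two tuples that agree on $\{2,3\}$. Likewise $(k_2,a,c)$ violates the key $\{1,3\}$ as soon as $a\mapsto c$ is not injective. This is exactly the situation your preceding sentence says to avoid.

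No choice of constants can fix this. Suppose $a,a'\notin C$ are mapped to the same pair $(b,c)$ with $b,c\notin\{a,a'\}$. Then the transposition $(a\ a')$ is a $C$-automorphism of the input, hence of the output. So an output tuple containing $a$ in only one column and not containing $a'$ has a twin, with $a'$ in place of $a$, that agrees with it on the other two columns. Hence every output tuple that contains $a$ but not $a'$ must contain $a$ in at least two columns.

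Your fallback does not close the gap either. Theorem~\ref{schroderbernstein} turns two given generic injections into a generic bijection; it cannot produce or repair a single injection.

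The missing idea is to keep your first attempt and repair it by a conditional change of pattern, not by tags. Up to a permutation of the columns, your families $(a,a,b)$ and $(c,a,a)$ are exactly the paper's encoding of $(x,y,z)\in R$ by $(x,y,x)$ and $(x,x,z)$. The clash you found is precisely the one the paper repairs: $(x,y,x)$ and $(y,y,R[1,3](y))$ agree on $\{2,3\}$ exactly when $x=R[1,3](y)$. In that case the paper uses $(x,y,y)$ instead of $(x,y,x)$; in your coordinates this is $(b,a,b)$ instead of $(a,a,b)$. This is safe because $y$ now determines $x$ through $R[1,3]$.

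The degenerate equalities $x=y$, $x=z$, $y=z$ are handled by omitting tuples that would collapse. For example, no $(x,x,x)$ is produced when $x=y\neq z$, and only $(x,y,y)$ when $y=z$ and $x=R[1,3](y)$. This gives eight cases. They are distinguished by the number of tuples with first component $x$ and by their equality types, which yields injectivity. The key $\{2,3\}$ is then verified by splitting on whether the second and third components of a tuple coincide.

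Without such a verified encoding you only know $\mathcal{T}_6\not\leqgen\mathcal{T}_1$. That leaves open that $\mathcal{T}_1$ and $\mathcal{T}_6$ are incomparable, in which case the diagram would not be a chain.
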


We prove the theorem by the following two lemmas.

\begin{lemma}
  $\T_1 \lestrong \T_6$.
\end{lemma}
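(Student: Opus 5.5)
We need to show that, for all sufficiently large $n$, $\cfun_{\T_1}(n) < \cfun_{\T_6}(n)$. The plan is to bound $\cfun_{\T_1}$ from above by something of order $n^{2n}$ times a small factor. We then bound $\cfun_{\T_6}$ from below by something of order $2^{\Omega(n^2)}$.

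\emph{Upper bound for $\T_1$.} An instance of $\T_1$ over a domain $D$ of size $n$ is a partial function from $D$ to $D\times D$, since column 1 is a key. Each element of $D$ either is not in the first column or is mapped to one of $n^2$ pairs. Hence
\[
\cfun_{\T_1}(n) = \sum_{k=0}^n \binom{n}{k} n^{2k} = (n^2+1)^n \leq 2^{n\log_2(n^2+1)},
\]
which is $2^{O(n\log n)}$.

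\emph{Lower bound for $\T_6$.} Under the keys $\{1,2\}$, $\{1,3\}$ and $\{2,3\}$, an instance is a partial Latin square: a set of triples $(i,j,k)$ in which any two coordinates determine the third. I will exhibit $2^{\Omega(n^2)}$ such instances using only elements of $D$. Split $D$ into three disjoint parts $A$, $B$ and $K$, each of size $m=\lfloor n/3\rfloor$. Identify each part with $\mathbb{Z}_m$. Consider the cyclic Latin square $L=\{(a,b,a+b) \mid a,b\in\mathbb{Z}_m\}$, placed on $A\times B\times K$. Every subset $S\subseteq L$ still satisfies all three keys, because the key constraints are preserved under taking subsets. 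Distinct subsets give distinct instances, and all of them have active domain contained in $D$. Hence $\cfun_{\T_6}(n)\geq 2^{m^2}\geq 2^{(n/3-1)^2}$.

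Finally we compare. We have $2^{(n/3-1)^2}$ versus $2^{n\log_2(n^2+1)}$, and $(n/3-1)^2 - n\log_2(n^2+1)\to\infty$. So there is a $c$ such that $\cfun_{\T_1}(n)<\cfun_{\T_6}(n)$ for all $n\geq c$, which is exactly $\T_1\lestrong\T_6$.

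The only step needing care is the lower bound for $\T_6$. Checking that subsets of the cyclic Latin square satisfy all three pairwise keys is immediate. The point is that taking subsets is what gives the exponential-in-$n^2$ freedom, while the key constraints are inherited from the full square. The upper bound for $\T_1$ and the final asymptotic comparison are routine.
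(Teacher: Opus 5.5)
Your proof is correct. The upper bound $\cfun_{\T_1}(n)=(n^2+1)^n$ is exactly the paper's, but your lower bound for $\T_6$ takes a different route.

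The paper also identifies instances of $\T_6$ with partial Latin squares. It then invokes the known lower bound $(n!)^{2n}/n^{n^2}$ on the number of \emph{full} Latin squares of order $n$, cited from van Lint--Wilson; that bound ultimately rests on the van der Waerden permanent inequality. Applying Stirling, the paper obtains growth of order $(n/e^2)^{n^2}$, i.e.\ $2^{n^2\log_2 n-O(n^2)}$.

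You instead use the fact that the three key constraints are preserved under taking subsets, so every subset of one fixed cyclic Latin square is an instance. This argument is entirely elementary and self-contained. The price is a much weaker estimate, $2^{\Theta(n^2)}$, which is still far more than needed against $2^{O(n\log n)}$.

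The three-way split of $D$ is unnecessary. The keys compare two tuples only column by column, so the same element may serve as row index, column index and symbol without any conflict. Identify all of $D$ with $\mathbb{Z}_n$ and take all subsets of $\{(a,b,a+b)\mid a,b\in\mathbb{Z}_n\}$. This gives $\cfun_{\T_6}(n)\geq 2^{n^2}$ directly, which simply counts the partial Latin squares contained in one fixed full square.
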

\begin{proof}
  We begin by observing that \[ \cfun_{\T_1}(n) =
  \sum_{k = 0}^{n} \binom{n}{k} {n^{2k}} = (1 + n^2)^n. \]
  Indeed, we can select $k$ key values for the first column, and
  then for each we select a pair of values from the domain, for
  which there are $n^2$ distinct options. 

  We further \textbf{claim}
  that \[ \cfun_{\T_6}(n) \geq \frac{(n!)^{2n}}{n^{n^2}}
  \qquad \text{for $n>0$.} \]
  Assuming this claim, the lemma follows as we can use Stirling's
  formula to obtain $\cfun^3_{\T_6}(n)\in \Omega((n/e^2)^{n^2})$.
  The latter indeed strongly dominates $\cfun_{\T_1}(n)$.

  The claim above is based on the observation that the number of
  instances of $\T_6$ on a domain of at most $n$ elements is
  equal to the number of \emph{partial Latin squares} of order
  $n$. This can be seen as follows. Every filled position
  corresponds to a tuple in the instance where the row and the
  column identify the first two values in the tuple and the
  number in the entry is the third value in the tuple. Clearly the
  row and column identify the entry's value, and since in a Latin
  square a number can appear at most once in a certain row or
  column, it holds that given an entry value, the row identifies
  the column of the entry and vice versa.

  Now the formula in the
  claim is indeed a lower bound for the number of (full) Latin
  squares \cite[page 157]{vanlint-wilson},
  and thus also a lower bound of the number of
  partial Latin squares.
\end{proof}

\begin{lemma} \label{lemtriples}
$\T_1 \leqgen \T_6$.
\end{lemma}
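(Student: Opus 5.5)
The plan is to exhibit an explicit schema mapping $m$ from $\T_1$ to $\T_6$, defined by a relational algebra expression (possibly with a few constants) so that genericity is immediate, and then to verify (a) that $m(I)$ satisfies the three keys $\{1,2\}$, $\{1,3\}$, $\{2,3\}$, and (b) that $I$ can be recovered from $m(I)$. The starting observation is that an instance $I$ of $\T_1$ is the graph of a partial function $a \mapsto (f(a),g(a))$, i.e., two partial functions $f,g$ with a common domain $\pi_1(I)$. Instances of $\T_6$ are partial Latin squares, and any set of triples in which the key value $a$ occupies \emph{two} of the three positions tends to satisfy all pairwise keys automatically. This suggests encoding $f$ and $g$ by such ``diagonal'' triples.

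Concretely, I would first try
\[ m(I) \;=\; \{(a,a,b) \mid (a,b,c)\in I\} \;\cup\; \{(c,a,a) \mid (a,b,c)\in I\}, \]
that is, $\pi_{1,1,2}(R) \cup \pi_{3,1,1}(R)$. Within the first part all three pairwise projections, namely $(a,a)$, $(a,b)$ and $(a,b)$, are determined by $a$. Within the second part the projections $(c,a)$, $(c,a)$ and $(a,a)$ are likewise determined by $a$. So the only possible key violations are between the two parts. A short case analysis shows these occur exactly in the degenerate situations $c=a$ (a fixpoint of $g$) or $b=a$ (a fixpoint of $f$). For decoding in the non-degenerate case, triples with equal first two components come from the first part and triples with equal last two components from the second part. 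This recovers the domain of the function, $f$, and $g$, which gives injectivity.

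The main obstacle is precisely these degenerate tuples $(a,a,c)$, $(a,b,a)$ and $(a,a,a)$, where the two diagonal patterns collide or become indistinguishable. My plan is to treat them by a case split on the pattern of equalities within the tuple, encoding each case with a differently shaped gadget that uses fresh constants from a finite set $C$. For example, a fixpoint of $g$ would be marked by a triple such as $(a,c_0,c_1)$-like shapes chosen so that every pairwise projection still contains $a$ exactly once, and hence remains unique. Since which case applies is expressible by equality selections, the whole $m$ stays a relational algebra expression with constants, and is thus $C$-generic.

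I would then recheck all key pairs between the different gadget types, and add decoding rules that first recognise the constant-tagged gadgets. One more subtlety must be handled: instances of $\T_1$ may themselves mention the constants in $C$. I expect to deal with this either by choosing gadget shapes whose recognition does not rely on constants being absent from $\adom(I)$, or by an extra tag. If such patching becomes too intricate, the fallback is to encode each tuple $(a,b,c)$ by a small partial Latin sub-square on $\{a,b,c\}\cup C$ whose row/column/symbol roles are pinned down by $a$ appearing in every triple.
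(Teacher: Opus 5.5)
There is a genuine gap in your collision analysis between the two parts $\pi_{1,1,2}(R)$ and $\pi_{3,1,1}(R)$: they do not clash only when $b=a$ or $c=a$. Use your notation $(a,f(a),g(a))\in I$. The triple $(a,a,f(a))$ from the first part and the triple $(g(a'),a',a')$ from the second part agree on columns $1,3$ whenever $a'=f(a)$ and $g(a')=a$, and they are distinct as soon as $f(a)\neq a$. So the key $\{1,3\}$ fails whenever $g(f(a))=a\neq f(a)$. Concretely, take $I=\{(a,b,c),(b,d,a)\}$ with $a,b,c,d$ distinct, so no tuple has any internal equality. Then $m(I)$ contains both $(a,a,b)$ and $(a,b,b)$.

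This clash comes from the interaction of two \emph{different} tuples of $I$. Your proposed repair is a case split on the equality pattern inside a single tuple, with constant-tagged gadgets, so it can neither detect nor fix the clash. What is missing is a join-dependent case distinction. When $g(f(a))=a$, the triple encoding $f(a)$ must take another shape, and then all three keys must be rechecked. That recheck includes new clashes, for instance when in addition $f(a)=g(a)$.

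Your basic encoding coincides, up to a permutation of the columns (under which $\mathcal{T}_6$ is invariant), with the paper's $(x,y,z)\mapsto\{(x,y,x),(x,x,z)\}$. The paper's refinement is exactly the join-dependent case above. When $x=R[1,3](y)$ it uses $(x,y,y)$ instead of $(x,y,x)$; in your coordinates this is $(b,a,b)$ instead of $(a,a,b)$. It also needs separate subcases for $x=z$ and for $y=z$.

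The fixpoint cases need no constants either. The paper simply omits the clashing triple $(x,x,x)$, because the remaining equality patterns still identify which rule was applied. The resulting map is $\emptyset$-generic, which also disposes of your unresolved worry about instances that mention the constants in $C$.

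Your sample gadget $(a,c_0,c_1)$ could not work in any case. Its projection on columns $2,3$ is $(c_0,c_1)$, which does not depend on $a$, so two key values using it would violate the key $\{2,3\}$.
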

\begin{proof}
  As done earlier, since there is only one relation name $R$, for
  instances $I$ we abbreviate $I(R)$ by $I$.  Let $I$ be any
  instance of $\T_1$.  Since the first column is a key in $I$, we
  can view $I$ as two mappings $I_2$ and $I_3$ from the first
  column of $I$ to $\adom(I)$, so that \[ I = \{(x,I_2(x),I_3(x))
  \mid x \in \pi_1(R)(I)\}. \]

  The idea of a mapping from $\T_1$ to $\T_6$ is to try to
  represent, for each $x$, the
  triple $(x,y,z)$ in $I$ by the two triples $(x,y,x)$ and
  $(x,x,z)$. For example:
  \[ \begin{array}{|ccc|} \hline a_1 & b & c \\ a_2 & b & c \\
  \hline \end{array} \quad \mapsto \quad
  \begin{array}{|ccc|} \hline a_1 & b & a_1 \\
    a_1 & a_1 & c \\
  a_2 & b & a_2 \\
    a_2 & a_2 & c \\
  \hline \end{array} \]
  Note that the instance on the left is an instance of $\T_1$ but
  not of $\T_6$ since $\{2,3\}$ is not a key.  The instance on
  the right satisfies all constraints of $\T_6$, and the instance
  on the left can be recovered from it, by looking at equality
  patterns.

  This mapping needs refinement, however, when there are
  equalities already present in the triple $(x,y,z)$ in $I$ to be
  represented.  For example, if $x=y$, we would map to
  $\{(x,x,x),(x,x,z)\}$ thus violating $\T_6$ since $\{1,2\}$ is
  not a key.  We will avoid this violation by not producing
  $(x,x,x)$ in such cases.

  Another need for refinement occurs when, for a triple $(x,y,z)$
  in $I$, we have $x = I_3(y)$, for example:
  \[ \begin{array}{|ccc|} \hline a & b & c \\ b & d & a \\
  \hline \end{array} \quad \mapsto \quad
  \begin{array}{|ccc|} \hline a & b & a \\ a & a & c \\
  b & d & b \\ b & b & a \\ \hline \end{array} \]
  The instance on the right now violates the key $\{2,3\}$.  We will
  repair this by using an alternative representation of $(x,y,z)$
  in this case by the two triples $(x,y,y)$ and $(x,x,z)$.

  The formal definition of the mapping and the verifications that
  it correctly maps $\T_1$ to $\T_6$ and is injective are
  given in Appendix~\ref{apptriples}.  The mapping is also
  $\emptyset$-generic and is actually expressible in relational
  algebra.
\end{proof}

\section{Creation of identifiers} \label{secid}

In practical applications of schema mappings, we may want the schema
mapping to be able to introduce new identifiers.  

\begin{example} \label{exid}
  It is well known
  how a $k$-ary relation $R$ can be represented by
  $k$ binary relations $A_1$, \dots, $A_k$, by introducing tuple
  identifiers. For example:
  \[ \begin{array}{|ccc|}
    \multicolumn{3}{c}{R} \\
    \hline a & b & c \\ d & e & f \\
  \hline \end{array} \quad \mapsto \quad
  \begin{array}{|cc|} 
    \multicolumn{2}{c}{A_1} \\
    \hline 
    \id 1 & a \\ \id 2 & d \\
  \hline \end{array} \quad
  \begin{array}{|cc|} 
    \multicolumn{2}{c}{A_2} \\
    \hline 
    \id 1 & b \\ \id 2 & e \\
  \hline \end{array} \quad
  \begin{array}{|cc|} 
    \multicolumn{2}{c}{A_2} \\
    \hline 
    \id 1 & c \\ \id 2 & f \\
  \hline \end{array} \]
  \qed
\end{example}

  Indeed, creation of identifiers is a standard feature in schema
  mapping languages and is known by many names (existential
  rules, Skolem functions, blank node generation, etc.)
  It is also easily accomplished in SQL by auto-incrementing.
  Actually, value invention in query languages was already
  investigated intensively in the 1980s and 1990s
  \cite[Part~E]{ahv_book}.

  The notion of generic dominance, however, is
  language-independent, and based instead on the general notion of
  generic schema mapping.  To discuss information capacity in the
  setting of identifier creation, we thus need to extend the
  notion of generic mapping.  Indeed, the mapping illustrated in the
  example above is not generic in the sense defined in
  Section~\ref{secprelim}, yet we may want to allow it.

  Interestingly, such extensions of genericity have already been
  proposed \cite{ak_iql,vvag_compl}.  In this section, we will
  adopt the proposal by Abiteboul and Kanellakis 
  \cite{ak_iql}. Before we can do this, however, we must first
  extend our relational data model to accommodate identifiers.

\subsection{Schemas and instances with identifiers}

In relational vocabularies,
instead of a single arity, we now assume that each relation name
$R$ has a double arity $(k,\ell)$, with $k$ and $\ell$ natural
numbers.  The idea is that relation instances of $R$ will have $k$
identifier columns and $\ell$ value columns.  

Next to the domain $\dom$ of data values, we now assume a
separate and disjoint domain $\idom$ of identifiers.  An
instance $I$ of a relational vocabulary $\V$ now assigns to each
relation name $R/(k,\ell)$ in $\V$ a finite subset of $\idom^k
\times \dom^\ell$.

\begin{example} 
  In Example~\ref{exid} we showed an instance of the vocabulary
  consisting of three relation names $A_1$, $A_2$ and $A_3$, each
  of arity $(1,1)$. \qed
\end{example}

When two instances differ only in the choice of identifiers, we
do not want to distinguish them.  Formally, $I_1$ and $I_2$ are
called \emph{equal up to ids}, denoted by $I_1 \idEq I_2$, if
there is a permutation $\pi$ of $\idom$ such that $\pi(I_1)=I_2$.
Here, $\pi$ does not touch the data values from $\dom$.

\subsection{Generic dominance in the presence of identifiers}
\label{secidgen}

A \emph{schema}, as before, consists of a relational vocabulary
(now allowing identifier columns) accompanied with a set of
integrity constraints.  For the following definitions, it does
not matter what form the integrity constraints take.  As long as
we have a notion of when an instance satisfies an
integrity constraint, we can define the set $\sem{\sch}$ of
instances of schema $\sch$ as before as the set of instances of
the vocabulary that satisfy all constraints.

In order to allow identifier creation, we can no longer define a
schema mapping as a function, since the choice of identifiers is
arbitrary.  Still, we want the mapping to be functional up to
the choice of newly created identifiers.\footnote{Functionality
up to the choice of ids was
called \emph{determinacy} by Abiteboul and Kanellakis
\cite{ak_iql}.}

Formally, a schema mapping from $\sch$ to $\sch'$ is now a
relationship $M \subseteq \sem{\sch} \times \sem{\sch'}$.  We
consider the following properties for such $M$:

\begin{description}
    \item[Total:] for each  $I \in \sem{\schema}$ there
      exists $J \in \sem{\schema'}$ such that $M(I, J)$
    \item[Functional up to ids:] if $M(I,J)$ and $M(I,J')$, then
      $J \idEq J'$ (equal up to ids).
    \item[Injective up to ids:] if $M(I,J)$ and $M(I',J)$, then
      $I \idEq I'$.
\item[Generic for values:] There exists is a finite set $C \subseteq
  \valDom$ such that for any permutation $\pi$ of $\dom$ that is
    the identify on $C$, and for any $(I,J) \in M$, also
    $(\pi(I),\pi(J)) \in M$. (Here, $\pi$ does not touch the
    identifiers.)
\item[Generic for identifiers:] For
any permutation $\pi$ of $\idom$,
    and for any $(I,J) \in M$, also $(\pi(I),\pi(J)) \in M$.
    (Here, $\pi$ does not touch the data values.)
\end{description}

We can finally define: \emph{$\sch$ is generically dominated by
$\sch'$ with identifiers}, denoted by $\sch \leqidgen \sch'$, if
there exists a schema mapping from $\sch$ to $\sch'$ that has all
the above five properties.

\begin{remark} \label{remeqid} A mapping as above that is
  functional up to ids, and generic for identifiers,
  can also be viewed as a function that
  maps equality classes up to ids to equality classes up to ids.
  Genericity then needs only to be specified for permutations of
  $\dom$.  Also, injectivity up
  to ids then reverts to standard injectivity.
\qed
\end{remark}

Generic equivalence with identifiers can again be defined as
$\leqidgen$ in both directions.  Like
Theorem~\ref{schroderbernstein}, also in this extended context,
we can show the following Schr\"oder-Bernstein-like theorem:
\begin{theorem}
  $\sch$ is generically equivalent to $\sch'$ with identifiers if
  and only if $\sch \leqidgen \sch'$ by a mapping that is also
  surjective.
\end{theorem}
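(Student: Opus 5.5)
The plan is to reduce to the ordinary Schröder--Bernstein argument, working on equality classes up to ids as in Remark~\ref{remeqid}. The ``if'' direction is immediate. If $M$ has all five properties and is also surjective, then its inverse $M^{-1}=\{(J,I)\mid (I,J)\in M\}$ is total by surjectivity. Injectivity of $M$ up to ids gives functionality of $M^{-1}$ up to ids, and functionality of $M$ gives injectivity of $M^{-1}$. Both genericity conditions are symmetric in $I$ and $J$, so $M^{-1}$ inherits them with the same $C$. Hence $\sch'\leqidgen\sch$, and together with $M$ this gives equivalence.

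For the ``only if'' direction, I will pass to quotients. Let $Q=\sem{\sch}/{\idEq}$ and $Q'=\sem{\sch'}/{\idEq}$. By Remark~\ref{remeqid}, a mapping with the five properties is exactly an injective function $f:Q\to Q'$. This function must commute with every permutation $\pi$ of $\dom$ that fixes a finite set $C$. That action is well defined on classes, because permutations of $\dom$ and of $\idom$ commute. Conversely, any such $f$ yields a mapping $M=\{(I,J)\mid [J]=f([I])\}$ with all five properties. The key point to check there is genericity for identifiers, which holds because $\pi(I)\idEq I$ when $\pi$ permutes $\idom$.

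So we are given $C$-generic injections $f:Q\to Q'$ and $g:Q'\to Q$, where I take $C$ to be the union of the two constant sets; genericity with respect to a set implies genericity with respect to any superset. I will then replay the construction used for Theorem~\ref{schroderbernstein} in Appendix~\ref{appschroderbernstein}, with $Q$, $Q'$ and their group action in place of $\sem{\sch_1}$ and $\sem{\sch_2}$. Concretely, the standard Schröder--Bernstein argument goes as follows. For $x\in Q$, trace its chain of preimages $x, g^{-1}(x), f^{-1}g^{-1}(x),\dots$ and classify $x$ by where the chain stops: in $Q$, in $Q'$, or never. Then define $h(x)=f(x)$ unless the chain stops in $Q'$, in which case $h(x)=g^{-1}(x)$. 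This $h$ is a bijection $Q\to Q'$.

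The main step, and the only nonroutine one, is that $h$ is $C$-generic. Let $\pi$ be a permutation of $\dom$ fixing $C$. Since $f$ and $g$ commute with $\pi$ and $\pi$ is a bijection on $Q$ and on $Q'$, we get $g^{-1}(\pi x)=\pi g^{-1}(x)$ whenever either side is defined, and likewise for $f^{-1}$. Hence $\pi$ maps the preimage chain of $x$ onto the preimage chain of $\pi x$ term by term. So $x$ and $\pi x$ have the same stopping type, and $h(\pi x)=\pi h(x)$. This only uses the action being by bijections compatible with $\idEq$, which I verify first.

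Finally, I lift $h$ back to a relation $M$ as above. $M$ is then total, functional and injective up to ids, generic for values and for identifiers, and surjective. I expect the only real care to be the well-definedness of the $\dom$-action on $\idEq$-classes; the rest is the classical argument.
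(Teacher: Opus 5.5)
Your proposal is correct and takes the route the paper points to for this omitted proof. Via Remark~\ref{remeqid}, you view the witnessing mappings as $C$-generic injections between $\idEq$-classes, replay the Schr\"oder--Bernstein construction of Appendix~\ref{appschroderbernstein}, and use the inverse mapping for the easy direction. One detail worth stating: the constraints here are arbitrary, so it is not automatic that a value permutation fixing $C$, or an identifier permutation, maps $\sem{\sch}$ and $\sem{\sch'}$ into themselves. This closure follows from totality plus genericity of the two given mappings, and it is what makes your action on $Q$ and $Q'$ a well-defined bijection.
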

We omit the proof.  Surjectivity means here that for each $J \in
\sem{\sch'}$ there exists $I \in \sem{\sch}$ such that $M(I,J)$.

Also the notions of \textbf{cardinality functions} and
\textbf{absolute} and \textbf{strong dominance} can be
extended with identifiers.  Here we must take care not to count
individual instances, but to count modulo equality up to ids.
We omit the formal details.  The notion of \textbf{strong
inclusion} remains largely the same.

Finally, the notion of \textbf{isomorphism} can be adapted: an 
\emph{isomorphism with identifiers} from instance $I$ to instance
$J$ is a pair $(\iota,\pi)$, where $\iota$ ($\pi$) is a
permutation of $\idom$ ($\dom$) such that $J = \iota(\pi(I))$.
However, if we take the perspective of equality classes up to ids
(Remark~\ref{remeqid}) we only need to consider permutations
of $\dom$.

\subsection{Schemas with a binary relation with one identifier column}

As an exercise in working with the identifier notions just introduced,
let us explore again the simplest nontrivial setting:
that of a single binary relation name $R$ with one identifier
column and one value column.  So formally, $R$ has arity $(1,1)$.
As integrity constraints on $R$ we only consider
keys.  Thus, there are just four schemas to consider:
\newcommand\K{\mathcal K}
\newcommand\Knone{\K_{\rm none}}
\newcommand\Kone{\K_{\rm id}}
\newcommand\Ktwo{\K_{\rm val}}
\newcommand\Kboth{\K_{\rm both}}
\begin{description}
  \item[$\Knone$:] no keys
  \item[$\Kone$:] the first (identifier) column is a key
  \item[$\Ktwo$:] the second (value) column is a key
  \item[$\Kboth$:] both columns are keys
\end{description}
We have two strong inclusion chains:
\[ \Kboth \substrong \Kone \substrong \Knone \quad \text{and}
\quad
\Kboth \substrong \Ktwo \substrong \Knone \]
So, we only need to understand how $\Kone$ and $\Ktwo$ compare
with respect to generic dominance with identifiers.  We can show
the following.  The proof shows that the presence of identifiers 
has an intricate impact on the information capacity of schemas.

\begin{theorem}
  $\Kone$ and $\Ktwo$ are incomparable with respect to generic
  dominance with identifiers.
\end{theorem}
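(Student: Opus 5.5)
We would first pass to equality classes up to ids, as in Remark~\ref{remeqid}. An instance of $\Kone$ assigns one value to each identifier. Since identifiers are interchangeable, an instance of $\Kone$ up to ids is just a finite \emph{multiset} of data values. An instance of $\Ktwo$ assigns to each value exactly one identifier, so up to ids it is a \emph{set partition} of a finite set of values, with blocks given by the identifiers. By the remark, a mapping witnessing $\leqidgen$ becomes an injective function $f$ on these classes which commutes with every permutation of $\dom$ that fixes some finite $C$. The two non-dominances then need different arguments: counting for one direction, and an automorphism argument in the style of Hull for the other.

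\textbf{$\Kone \not\leqidgen \Ktwo$.} First we would establish the analogue of $\adom(f(I)) \subseteq \adom(I)\cup C$. Suppose a value $v \notin \adom(I)\cup C$ occurs in $J=f(I)$. Take a fresh $w$ occurring in neither $I$ nor $J$, and let $\pi$ be the transposition $(v\ w)$. Then $\pi(I)=I$, so $\pi(J)$ must equal $J$ up to ids. But permutations of identifiers never touch values, and $w$ occurs in $\pi(J)$ but not in $J$, a contradiction. Now fix a finite $D\subseteq\dom$. There are infinitely many multisets over $D$, because multiplicities are unbounded. However, there are only finitely many partitions of subsets of $D\cup C$. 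So infinitely many $\Kone$-classes are sent into a finite set, which contradicts injectivity. In the terminology of the paper, this is a failure of absolute dominance, since $\cfun_{\Kone}(n)=\infty$ while $\cfun_{\Ktwo}(n)$ is finite.

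\textbf{$\Ktwo \not\leqidgen \Kone$.} Counting cannot help in this direction, so we would use automorphisms. Suppose $f$ is $C$-generic and injective, and pick distinct values $a,b,c,d\notin C$. Consider the three partitions
\[P_1=\{\{a,b\},\{c,d\}\},\quad P_2=\{\{a,c\},\{b,d\}\},\quad P_3=\{\{a,d\},\{b,c\}\}.\]
By the inclusion of active domains just shown, each $f(P_i)$ is a multiset over $\{a,b,c,d\}\cup C$, which we describe by its multiplicity function. The stabiliser of $P_1$ in $\mathrm{Sym}(\{a,b,c,d\})$ contains $(a\ b)$, $(c\ d)$ and $(a\ c)(b\ d)$, so it acts transitively on $\{a,b,c,d\}$. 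Genericity forces $f(P_1)$ to be invariant under this stabiliser. Hence $f(P_1)$ gives the same multiplicity to all four of $a,b,c,d$, and some fixed multiplicities to the elements of $C$.

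Now take $\sigma=(b\ c)$, which fixes $C$ and maps $P_1$ to $P_2$. Genericity gives $f(P_2)=\sigma(f(P_1))$. Since $f(P_1)$ has constant multiplicity on $\{a,b,c,d\}$ and $\sigma$ fixes $C$, we get $\sigma(f(P_1))=f(P_1)$ as multisets. Thus $f(P_1)=f(P_2)$, which contradicts injectivity.

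The step we expect to be most delicate is the bookkeeping in the id setting. We must check carefully that ``functional up to ids'' together with genericity for values really yields the equivariance of $f$ on classes used above. We must also check that the argument excluding fresh values is valid when identifier permutations are allowed; it is, because those permutations never alter data values.
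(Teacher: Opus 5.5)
Your proof is correct and follows the paper's argument. One direction counts $\ideq$-classes: infinitely many multisets over $D$ against finitely many partitions of subsets of $D\cup C$. The other is an automorphism argument on the partition $\{\{a,b\},\{c,d\}\}$ of four values outside $C$. The differences are minor: you actually prove the active-domain inclusion that the paper only asserts (choose the fresh $w$ outside $C$ as well), and instead of invoking the general structure of automorphism groups of multisets you show directly that $(b\ c)$ preserves $f(P_1)$ while mapping $P_1$ to the distinct partition $P_2$.
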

\begin{proof}
  First, assume, for the sake of contradiction, that
  $\Kone \leqidgen \Ktwo$ by schema mapping $M$.
  Then $M$ is $C$-generic for values for some finite $C \subseteq
  \dom$.  If $M(I,J)$, then $\adom(J) \subseteq
  \adom(I) \cup C$.  Here, for instances $I$ with identifiers,
  $\adom(I)$ still denotes the set of data values from $\dom$
  (i.e., not identifiers) appearing in $I$.  Taking into account
  Remark~\ref{remeqid}, we can view $M$ as an injective function
  that maps $\ideq$-classes of instances of $\Kone$ to
  $\ideq$-classes of instances of $\Ktwo$.

  Now the crucial observation is that
  for any finite $D \subseteq \dom$,  the number of classes
  $[I]$ in $\Kone$ with $\adom(I) \subseteq D$ is infinite.
  Indeed, a value in the second column can be attached to one identifier,
  or to two, or to three, and so on.  In
  contrast, the number of classes $[J]$ in $\Ktwo$ with $\adom(J)
  \subseteq D \cup C$ is finite.  Indeed, every value in the second column
  is attached to exactly one identifier, so the only degree of
  freedom we have is how many distinct identifiers there are in
  the instance.  Hence, an function $M$ cannot be injective; a
  contradiction.

  For the other direction, suppose, for the sake of
  contradiction, that $\Ktwo \leqidgen \Kone$ by
  schema mapping $M$.  Again $M$ is $C$-generic for values for
  some finite $C \subseteq \dom$.  Pick four distinct elements
  $a$, $b$, $c$, $d$ in $\dom$ outside $C$, and consider the
  instance $I$ of $\Ktwo$ where \[ I(R)=\{(\id 1,a), (\id 1,b),
  (\id 2,c), (\id 2, d)\}. \]  As mentioned at the end of
  Section~\ref{secidgen}, we can use permutations of $\dom$ as
  isomorphisms of $\ideq$-classes of instances.  We observe that
  the permutation $(a\ c)(b\ d)$ is a $C$-automorphism of $[I]$,
  (the $\ideq$-class of $I$),
  but the individual transpositions $(a\ c)$ and $(b\ d)$ are
  not.

  We can also analyze
  how the automorphism group of instances $J$ of $\Kone$ must
  look like.  For any data value $x$ in
  the second column, we can count the number of
  identifiers that $x$ is attached to in $J$.  Call two data
  values $x$ and $y$ \emph{count-equivalent} if they have the same such
  count.  Then the automorphism group of $[J]$ is generated by all cyclic
  permutations of count-equivalent values.  In particular, if an
  automorphism is a composition of two disjoint cycles, then both
  cycles are themselves automorphisms.

  Now let $J_0$ be an instance of $\Kone$
  such that $M(I,J_0)$.  Genericity and injectivity
  of $M$ implies that $[J_0]$ has exactly the same
  $C$-automorphisms as $[I]$.  In particular, $(a\ c)(b\ d)$ is
  an automorphism of $[J_0]$.  But then also $(a\ c)$ must be
  such an automorphism.  However, it is not an automorphism of
  $[I]$; a contradiction.
\end{proof}

\begin{remark}
  To conclude this section we remark that schema $\Kboth$ is
  generically equivalent (with identifiers) to the simple
  vocabulary $\V_{11}$ with a single unary relation name $S$ with a value
  column (formally, of arity $(0,1)$).  Indeed, intuitively,
  the identifier column is redundant in $\Kboth$.
  Formally, $\Kboth \leqidgen \V_{11}$
  is realized simply by the projection on
  the second column of $R$. The converse
  $\V_{11} \leqidgen \Kboth$ is realized by the schema mappings
  that adds an identifier column to $S$, consisting of all
  distinct identifiers.
\end{remark}

\section{Conclusion and outlook} \label{seconc}

We have revived the systematic investigation of information
capacity of relational schemas.  Indeed, Hull \cite{hull_siam}
posed as an open question to characterize generic dominance,
which we have now done for the case of a single binary relation,
keys, and inclusion dependencies.  Moreover, we peeked behind the
curtain of the ternary case; we extended the basic notions to
accommodate the creation of identifiers; and we showed that these
extended notions are workable by applying them to the case of a
single binary relation with one identifier column and one value
column.

As already mentioned in the Introduction, the challenge is to map
out the information capacities of graph database schemas, and
compare these to common relational database schemas.  We think of
our work as a first step towards this challenge, by bringing
together all the necessary ingredients.  Thus, the next step is
to consider schemas with \textbf{multiple} unary and binary relations,
with identifiers, and keys and inclusion dependencies.  Indeed,
such schemas can already capture many aspects of graph schemas,
by modeling nodes and relationships using identifiers, and using
data values for properties \cite{kger}.

A further step would be to include more powerful key and
participation constraints that involve joins of the relations, as
proposed, e.g., in PG-Schema and KGER  \cite{pgschema,kger}.

Finally, we note that a number of further questions have remained
open since the work of Hull \cite{hull_siam}.  The most
interesting one concerns calculous dominance, mentioned in
Section~\ref{secprelim}.  It is indeed highly desirable, when
generic dominance holds, to be able to realize it by relational
calculus/algebra operations.  Indeed, in the results we have
presented here, this happens to hold.  An interesting question
is to characterize the situations where generic dominance and
calculous dominance coincide.

\bibliography{database-do-not-edit,extra} % JANVDB DO NOT EDIT DATABASE.BIB; use EXTRA.BIB

\begin{thebibliography}{10}

\bibitem{ahv_book}
S.~Abiteboul, R.~Hull, and V.~Vianu.
\newblock {\em Foundations of Databases}.
\newblock Addison-Wesley, 1995.

\bibitem{abihull_restructtcs88}
S.~Abiteboul and Richard Hull.
\newblock Restructuring hierarchical database objects.
\newblock {\em Theoretical Computer Science}, 62:3--38, 1988.

\bibitem{ak_iql}
S.~Abiteboul and P.C. Kanellakis.
\newblock Object identity as a query language primitive.
\newblock {\em Journal of the ACM}, 45(5):798--842, 1998.

\bibitem{shacl-shex-pgschema}
S.~Ahmetaj, I.~Boneva, J.~Hidders, K.~Hose, M.~Jakubowski, J.E. Labra~Gayo,
  W.~Martens, F.~Mogavero, F.~Murlak, C.~Okulmus, A.~Polleres, O.~Savkovic,
  M.~Simkus, and D.~Tomaszuk.
\newblock Common foundations for {SHACL}, {ShEx}, and {PG-Schema}.
\newblock In G.~Long et~al., editors, {\em Proceedings of the Web Conference},
  pages 8--21. ACM, 2025.

\bibitem{au_universality}
A.V. Aho and J.D. Ullman.
\newblock Universality of data retrieval languages.
\newblock In {\em Conference Record, 6th ACM Symposium on Principles of
  Programming Languages}, pages 110--120, 1979.

\bibitem{pgschema}
R.~Angles et~al.
\newblock {PG-Schema}: Schemas for property graphs.
\newblock {\em Proceedings of the ACM on Management of Data},
  1(2):198:1--198:25, 2023.

\bibitem{foundations-data-exchange-book}
M.~Arenas, P.~Barcel\'o, L.~Libkin, and F.~Murlak.
\newblock {\em Foundations of Data Exchange}.
\newblock Cambridge University Press, 2014.

\bibitem{atzeni-inclusion}
P.~Atzeni, G.~Ausiello, and C.~Batini.
\newblock Inclusion and equivalence between relational database schemata.
\newblock {\em Theoretical Computer Science}, 19(3):267--285, 1982.

\bibitem{bmsu_equivalence}
C.~Beeri, A.O. Mendelzon, Y.~Sagiv, and J.D. Ullman.
\newblock Equivalence of relational database schemes.
\newblock {\em SIAM Journal on Computing}, 10(2):352--370, 1981.

\bibitem{bonivoigt-graph-schema}
A.~Bonifati, P.~Furniss, A.~Green, R.~Harmer, E.~Oshurko, and H.~Voigt.
\newblock Schema validation and evolution for graph databases.
\newblock In A.H.F. Laender, B.~Pernici, et~al., editors, {\em Proceedings 38th
  International Conference on Conceptual Modeling}, volume 11788 of {\em
  Lecture Notes in Computer Science}, pages 338--456. Springer, 2019.

\bibitem{codd2}
E.~Codd.
\newblock Further normalization of the data base relational model.
\newblock In R.~Rustin, editor, {\em Data Base Systems}, pages 33--64.
  Prentice-Hall, 1972.

\bibitem{enrico-conceptual}
E.~Franconi and Th. Abgrall.
\newblock Logical foundations of conceptual modelling for relational and {SQL}
  databases: An introduction.
\newblock In C.M. Fonseca et~al., editors, {\em Advances in Conceptual
  Modeling: Proceedings ER 2025 Workshops}, volume 16190 of {\em Lecture Notes
  in Computer Science}, pages 5--25. Springer, 2026.

\bibitem{kger}
E.~Franconi, B.~Groz, J.~Hidders, N.~Pardal, S.~Staworko, J.~Van~den Bussche,
  and P.~Wieczorek.
\newblock The {KG-ER} conceptual schema language.
\newblock arXiv:2508.02548, 2025.

\bibitem{hull_siam}
R.~Hull.
\newblock Relative information capacity of simple relational schemata.
\newblock {\em SIAM Journal on Computing}, 15(3):856--886, 1986.

\bibitem{hy_format}
R.~Hull and C.K. Yap.
\newblock The format model, a theory of database organization.
\newblock {\em Journal of the ACM}, 31(3):518--537, 1984.

\bibitem{st1}
I.~Kobayashi.
\newblock Losslessness and semantic correctness of database schema
  transformation: Another look of schema equivalence.
\newblock {\em Information Systems}, 11(1):41--59, 1986.

\bibitem{st5}
P.~McBrien and A.~Poulovassilis.
\newblock Data integration by bi-directional schema transformation rules.
\newblock In {\em Proceedings 19th ICDE}, pages 227--238, 2003.

\bibitem{st2}
R.J. Miller, Y.E. Ioannidis, and R.~Ramakrishnan.
\newblock The use of information capacity in schema integration and
  translation.
\newblock In {\em Proceedings 19th VLDB}, pages 120--133, 1993.

\bibitem{odunlaing}
C.~O'Dunlaing and C.-K. Yap.
\newblock Generic transformations of data structures.
\newblock In {\em Proceedings 23rd Annual Symposium on Foundations of Computer
  Science}, pages 186--195. IEEE Computer Science Society, 1982.

\bibitem{oeis}
{OEIS Foundation Inc.}
\newblock The {O}n-{L}ine {E}ncyclopedia of {I}nteger {S}equences, 2025.
\newblock Published electronically at \url{http://oeis.org}.

\bibitem{st4}
A.~Poulovassilis and P.~McBrien.
\newblock A general formal framework for schema transformation.
\newblock {\em Data \& Knowledge Engineering}, 28(1):47--71, 1998.

\bibitem{st3}
X.~Qian.
\newblock Correct schema transformations.
\newblock In {\em Proceedings 5th EDBT}, pages 114--128, 1996.

\bibitem{tarski_logical}
A.~Tarski.
\newblock What are logical notions?
\newblock {\em History and Philosophy of Logic}, 7:143--154, 1986.
\newblock Edited by J. Corcoran.

\bibitem{ullman}
J.D. Ullman.
\newblock {\em Principles of Database and Knowledge-Base Systems}, volume~I.
\newblock Computer Science Press, 1988.

\bibitem{vvag_compl}
J.~Van~den Bussche, D.~Van~Gucht, M.~Andries, and M.~Gyssens.
\newblock On the completeness of object-creating database transformation
  languages.
\newblock {\em Journal of the ACM}, 44(2):272--319, 1997.

\bibitem{vanlint-wilson}
J.H. van Lint and R.M. Wilson.
\newblock {\em A Course in Combinatorics}.
\newblock Cambridge University Press, 1992.

\end{thebibliography}

\appendix

\section{Proof of Theorem~\ref{schroderbernstein}}
\label{appschroderbernstein}

We start with recalling a well-known proof for the Schr\"oder-Bernstein theorem that states that for any two sets, say $X$ and $Y$, for which there are injections $f : X \to Y$ and $g : Y \to X$ there exists a bijection $h : X \to Y$. 
    Consider the sets $X' = \{ 1 \} \times X$ and $Y' = \{ 2 \} \times Y$. Observe that since $f$ and $g$ are injections there are injections $f' : X' \to Y'$ defined by $f'(1, x) = (2, f(x))$ and an injection $g' : Y' \to X'$ defined by $g'(2, x) = (1, g(x))$. 
    We can represent the functions $f'$ and $g'$ together in a single directed bipartite graph $(V, E)$ where $V = X' \cup Y'$ and $E = \{ (x', y') \in V \times V \mid y' = f'(x') \vee y' = g'(x') \}$. This graph will be a disjoint union of connected components of one of the following shapes:
    \begin{center}        
 \begin{tikzpicture}[
  element/.style = {shape=circle, draw, line width=1, inner sep=0, minimum width=4},
  anode/.style = {element, fill=black},
  bnode/.style = {element, fill=white}
  ]
    
  \begin{scope}
    \node (a1) [anode] {};
    \node (b1) [bnode, right=0.5 of a1] {} edge[latex-] node [above] {\tiny $f'$} (a1);    
    \node (a2) [anode, right=0.5 of b1] {} edge[latex-] node [above] {\tiny $g'$} (b1);
    \node (b2) [bnode, right=0.5 of a2] {} edge[latex-] node [above] {\tiny $f'$} (a2);   
    \node (a3) [anode, right=0.5 of b2] {} edge[latex-] node [above] {\tiny $g'$} (b2);
    \node (b3) [bnode, right=0.5 of a3] {} edge[latex-] node [above] {\tiny $f'$} (a3);   
    \node (a4) [right=0 of b3] {$\ldots$};
    \node at (-1, 0) {(SH1)};
  \end{scope}    

  \begin{scope}[shift={(0, -0.7)}]
    \node (bb1) [bnode] {};
    \node (aa1) [anode, right=0.5 of bb1] {} edge[latex-] node [above] {\tiny $g'$} (bb1);    
    \node (bb2) [bnode, right=0.5 of aa1] {} edge[latex-] node [above] {\tiny $f'$} (aa1);
    \node (aa2) [anode, right=0.5 of bb2] {} edge[latex-] node [above] {\tiny $g'$} (bb2);   
    \node (bb3) [bnode, right=0.5 of aa2] {} edge[latex-] node [above] {\tiny $f'$} (aa2);
    \node (aa3) [anode, right=0.5 of bb3] {} edge[latex-] node [above] {\tiny $g'$} (bb3);   
    \node (bb4) [right=0 of aa3] {$\ldots$};
    \node at (-1, 0) {(SH2)};    
  \end{scope}    

  \begin{scope}[shift={(0, -1.4)}]
    \node (cb1) [] {$\dots$};
    \node (ca1) [anode, right=0 of cb1] {};    
    \node (cb2) [bnode, right=0.5 of ca1] {} edge[latex-] node [above] {\tiny $f'$} (ca1);
    \node (ca2) [anode, right=0.5 of cb2] {} edge[latex-] node [above] {\tiny $g'$} (cb2);   
    \node (cb3) [bnode, right=0.5 of ca2] {} edge[latex-] node [above] {\tiny $f'$} (ca2);
    \node (ca3) [anode, right=0.5 of cb3] {} edge[latex-] node [above] {\tiny $g'$} (cb3);   
    \node (cb4) [bnode, right=0.5 of ca3] {} edge[latex-] node [above] {\tiny $f'$} (ca3);       
    \node (ca4) [right=0 of cb4] {$\ldots$};
    \node at (-1, 0) {(SH3)};    
  \end{scope}

  \begin{scope}[shift={(0.5, -2.1)}]
    \node (db1) [bnode] {};
    \node (da1) [anode, right=0.5 of db1] {} edge[latex-]  node [above] {\tiny $g'$} (db1);    
    \node (de1) [right=0 of da1] {$\ldots$};
    \node (da2) [anode, right=0 of de1] {};   
    \node (db2) [bnode, right=0.5 of da2] {} edge[latex-]  node [above] {\tiny $f'$} (da2);
    \node (da3) [anode, below=0.5 of db2] {} edge[latex-, out=0, in=0, looseness=2] node [right] {\tiny $g'$} (db2);   
    \node (db3) [bnode, left=0.5 of da3] {} edge[latex-] node [above] {\tiny $f'$} (da3);       
    \node (de2) [left=0 of db3] {$\ldots$};
    \node (db4) [bnode, left=0 of de2] {};   
    \node (da4) [anode, left=0.5 of db4] {} edge[latex-] node [above] {\tiny $g'$} (db4) edge[-latex, looseness=2, out=180, in=180] node [left] {\tiny $f'$} (db1);       
     \node at (-1.5, -0.25) {(SH4)};       
  \end{scope}    
    
\end{tikzpicture}
       \end{center}
       Here the black and white nodes represent elements of $X'$ and $Y'$, respectively. The edges are labelled with the function they originate from. Note that every node has exactly one outgoing edge and at most one incoming edge. Given this it is not hard to see that any node in the graph is in a component of one of the four shapes. After all, if we consider a particular node $v$ and we start from it following a path of edges forwards we might encounter $v$ after a finite number of edges. If we do, $v$ is in shape SH4. If not, we can check what happens if we follow edges backwards from $v$. If this does not stop, $v$ is in shape SH3. If it does stop then $v$ is in shape SH1 if we end in a node from $X$, and in shape SH2 if we end in a node from $Y$. 

       Within a component we can construct a bijection from the black nodes to the white nodes as follows. For nodes in SH1, SH3 and SH4 we take the $f'$ edges, and for nodes in SH2 we take the inverse of the $g'$ edges. Since the components are disjoint, the union of all these bijections will be a bijection between all the black nodes and all the white nodes, and so a bijection $h' : X' \to Y'$.

       Based on $h'$ we define a function $h : X \to Y$ such that $h(x) = y$ iff $h'(1, x) = (2, y)$. Recall that $X' = \{ 1 \} \times X$ and $Y' = \{ 2 \} \times Y$, from which it is obvious that $h : X \to Y$ is a bijection if $h' : X' \to Y'$ is a bijection.

       We now return to our original setting, where the sets are $X = \sem{\schema_1}$ and $Y = \sem{\schema_2}$ with generic injections $f : X \to Y$ and $g : Y \to X$.
       We follow the preceding approach, so assume that $X' = \{ 1 \} \times \sem{\schema_1}$ and $Y' = \{ 2 \} \times \sem{\schema_2}$, then construct a bijection $h' : X' \to Y'$, from which we derive a bijection $h : X \to Y$. What remains to be shown is that $h$ is generic if $f$ and $g$ are generic.
       
       For a value permutation $\pi$ we define its analogue at the graph level as the function $\hat{\pi}' : V \to V$ such that $\hat{\pi}'(1, x) = (1, \hat{\pi}(x))$ and 
       $\hat{\pi}'(2, x) = (2, \hat{\pi}(x))$.      
       It is clear that $\hat\pi'$ is a bijection since $\hat\pi$ is a bijection. We can also observe that from the genericity of $f$ and $g$ it follows that we have this property at the graph level, i.e., $f' \circ \hat\pi' = \hat\pi' \circ f'$ and  $g' \circ \hat\pi' = \hat\pi' \circ g'$. We can use these observations to show that $\hat\pi'$ always maps nodes in a component of a particular shape to nodes in a component of the same shape. This is because $\hat\pi'$ defines a graph isomorphism, since it (i) defines a bijection from $V$ to $V$ and (ii) for any two nodes $v_1$ and $v_2$ it holds that the edge $(v_1, v_2)$ is in the graph iff the edge $(\hat\pi'(v_1), \hat\pi'(v_2))$ is in the graph. The latter holds because
       \begin{align*}
           (v_1, v_2) \in E & \Longleftrightarrow v_2 = f'(v_1) \vee v_2 = g'(v_1) \\
           & \Longleftrightarrow \hat\pi'(v_2) = \hat\pi'(f'(v_1)) \vee \hat\pi'(v_2) = \hat\pi'(g'(v_1)) \\
           & \Longleftrightarrow \hat\pi'(v_2) = f'(\hat\pi'(v_1)) \vee \hat\pi'(v_2) = g'(\hat\pi'(v_1)) \\          
           & \Longleftrightarrow (\hat\pi'(v_1), \hat\pi'(v_2)) \in E \text{ .}
       \end{align*}
       From this we can infer genericity for $h$ as follows. 
       First observe that if $f$ is $C$-generic and $g$ is $B$-generic, they will both be $(C \cup B)$-generic. We now set out to show that $h$ is also $(C \cup B)$-generic.
       Consider a $\valDom$ permutation $\pi$ that fixes all elements in $C \cup B$, and an instance $x \in X$. If $(1, x)$ is in a component of shape SH1, SH3 or SH4, then $h(x) = f(x)$. Moreover,  $\hat\pi'((1, x))$ is then in a component of the same shape and so  $h(\hat\pi(x)) = f(\hat\pi(x))$. Therefore $\hat\pi(h(x)) = \hat\pi(f(x)) = f(\hat\pi(x)) = h(\hat\pi(x))$. If, on the other hand, $(1, x)$ is in a component of shape SH2, then $h(x) = g^{-1}(x)$. Moreover,  $(1, \hat\pi(x))$ is then also in a component of shape SH2, so $h(\hat\pi(x)) = g^{-1}(\hat\pi(x))$. Therefore $\hat\pi(h(x)) = \hat\pi(g^{-1}(x)) = g^{-1}(\hat\pi(x)) = h(\hat\pi(x))$. So either way we can conclude that $\hat\pi(h(x)) = h(\hat\pi(x))$.

\section{Proof of Proposition~\ref{propcard}} \label{appcard}

For $\schema_0$
the cardinality function is $$\cfun_0 : n \mapsto 2^{n^2} \text{ .}$$
If $D$ has $n$ elements we can construct $n^2$ different pairs. Any subset of this is a valid instance, so there are $2^{n^2}$ possible instances.

For $\schema_1$ the cardinality function is $$\cfun_1 : n \mapsto \sum_{k=0}^n \binom{n}{k} (2^{k}-1)^k \text{ .}$$
We can first select a set of $k \leq n$ domain values from $D$ that will be $R[2]$. Then for we pick for each of them, so $k$ times, a non-empty subset, for which there are $2^k - 1$ options.

For $\schema_3$ the cardinality function is 
$$\cfun_3 : n \mapsto \sum_{k=0}^n \binom{n}{k} \mathbf{a}_k$$
where $\mathbf{a}_k$ is the number of $k \times k$ boolean matrices with no zero-row nor zero-column or, equivalently, the number of edge covers in the complete bipartite graph $\mathbf{K}_{k, k}$.

The formula describes the number of options for first picking the size of the $R[1] = R[2]$, and then it is not hard to see how each such matrix of that size corresponds to an instance of the schema and vice versa.

There is no known closed form for $\textbf{a}_k$, but it is known that
$$\textbf{a}_k = \sum_{i=0}^k (-1)^i\binom{k}{i}(2^{k-i}-1)^k .$$ For a reference see \cite[A048291]{oeis}. The proof of the formula can be done with the help of the inclusion-exclusion principle.

For $\schema_4$ the cardinality function is $$\cfun_4 : n \mapsto 2^{\binom{n}{2} + n} \text{ .}$$
Given $n$ nodes we have $\binom{n}{2}$ possible undirected edges with two distinct nodes, and $n$ possible loops. Any subset of these is in instance.

For $\schema_5$ the cardinality function is $$\cfun_5 : n \mapsto \sum_{k=0}^n \binom{n}{k} n^k = (n + 1)^n \text{ .}$$
We can first select a set $D'$ of $k \leq n$ domain values from $D$ that will be $R[1]$. Then we pick for each of them, so $k$ times, an element from $D$, for which there are $n$ options. The alternative equivalent formula can be explained as follows. For each element in the domain, we have $n+1$ choices: not use it in first column, or use it in first column together with any of the $n$ possible values for column $2$.

For $\schema_7$ the cardinality function is $$\cfun_7 : n \mapsto \sum_{k=0}^n \binom{n}{k} k^k \text{ .}$$
We can first select a set $D'$ of $k \leq n$ domain values from $D$ that will be $R[1]$. Then we pick for each element in $D'$, so $k$ times, an element from $D'$ for the second column, and for this there are $k$ options. Alternatively, we can explain this as first choosing $D'$ and the picking a function from $D'$ to $D'$, of which there are $k^k$.

For $\schema_{15}$ the cardinality function is $$\cfun_{15} : n \mapsto \sum_{k=0}^n \binom{n}{k}^2 k! \text{ .}$$
We first select a set $D'_1$ of $k \leq n$ domain values from $D$ for $R[1]$, and then a set $D'_2$ of $k$ domain values from $D$ for $R[2]$. Then we select a subset of $D'_1 \times D'_2$ that defines a bijection between $D'_1$ and $D'_2$, for which there are $k!$ possibilities. To see this, consider that we can order the elements in $D'_1$ and $D'_2$ in their natural order, and then each such bijection can be seen to correspond to a permutation of $k$ positions.

For $\schema_{6}$ the cardinality function is $$\cfun_{6} : n \mapsto \sum_{k=0}^n \binom{n}{k} k! \text{ .}$$
  The reasoning is as for $\sch_{15}$,
except here we first select a single set $D'$ of $k \leq n$ domain values from $D$ for $R[1] = R[2]$. Then we let $D'_1 = D'_2 = D'$ and pick the bijection, or which there are $k!$ choices.

%An interesting additional observation is that for $n > 0$ the
%formula can be simplified to $\lfloor n! e \rfloor$.

For $\schema_{9}$
the cardinality function is $$\cfun_{9} : n
\mapsto \sum_{k=0}^n \binom{n}{k} \mathbf{I}_k$$ where
$\mathbf{I}_k$ is the number of involutions on a set with $k$
elements.\footnote{An involution on $X$ is a function $f : X \to
X$ such that $f \circ f$ is the identity on $X$.}

  The reasoning is as for $\sch_{6}$,
except that we observe that the choice for the set of pairs corresponds to picking an involution. We are not aware of a known closed formula for $\mathbf{I}_k$ but the formula 
$$\mathbf{I}_k = \sum_{m=0}^{\lfloor \frac{k}{2} \rfloor} \frac{k!}{2^m m! (k-2m)!}$$
can be found in the literature and Wikipedia.

\section{Proof of Lemma~\ref{lemtriples}} \label{apptriples}

Let $R \in \sem{\T_1}$. We recall that $\key{R}{1}$ hence for
each $i\in\{2,3\}$ we can define a function $R[1,i]$ which maps
any $x \in R[1]$ to the value of component $i$ in the unique
triple $(x, y, z)\in R$, and maps any other $x$ to a default
value outside the domain, e.g., $\emptyset$.  We define the
mapping $f(R)$ as $\bigcup_{(x, y, z) \in R} f_0(x,y,z)$, where
$f_0$ is defined below.  To simplify the notations, all variables
which are not explicitly required to be equal on the right are
required to be distinct, which we represent by ``...''.  For each
$(x, y, z) \in R$ we define $f_0(x, y, z)$ as follows:

\[
\begin{array}{lll}
      \text{ (i)}& \{ (x, x, z)\} & \text{if  } x = y \ ... \\
 \text{ (i-a)}& \{ (x, x, x)\} &  \text{if  } x = y = z \ ... \\[.15cm]
      \text{ (ii)}& \{ (x, y, x)\} & \text{if  } x = z \ ...  \text{ and } x \neq R[1,3](y) \\
     \text{ (iii)}& \{ (x, y, x), (x, x, z) \} & \text{if  }  \ ...  \text{ and } x \neq R[1,3](y) \\
     \text{ (iii-a)}& \{ (x, y, x), (x, x, z) \} & \text{if  } y = z  \ ...  \text{ and } x \neq R[1,3](y) \\[.15cm]
     \text{ (iv)}& \{ (x, y, y), (x, x, z) \} & \text{if }  \ ...   \text{ and } x = R[1,3](y) \\
   \text{ (iv-a)}& \{ (x, y, y), (x, x, x) \} & \text{if } x = z \ ...  \text{ and } x = R[1,3](y) \\
\text{ (v)}& \{ (x, y, y) \} & \text{if } y = z \ ...  \text{ and } x = R[1,3](y) \\
\end{array}
\]

For any $(x, y, z)\in f(R)$, let $T_{x} = \{(x', y', z') \in f(R)
\mid x'=x\}$. 
By construction, the mapping preserves the first component of tuples, and this component is a key in $R$, hence for each $(x,y,z)\in R$,
we have $f_0(x, y, z) = T_x$.
To show that $f$ is injective, we must be able to
determine $y$ and $z$ from $T_x$ for all $x$.

We observe that (by considering whether $T_x$ contains $1$ or $2$
triples, then considering the equality types of the triples) we
can determine from $T_x$ the rule among cases (i) to (v) which
has been applied, because none of the $8$ sets of triples has
both the same number (1 or 2) and the same equality types of
triples.  Furthermore, in each case we can determine $y$ and $z$
from the set $T_x$ by considering which components are equal in
the triples. Consequently, $f$ is injective, and it is obviously
$\emptyset$-generic.

To prove that $f(R)$ belongs to $\T_{6}$, we first observe
that $f(R)$ satisfies $\key{f(R)}{1, 2}$ and $\key{f(R)}{1, 3}$.
This is because for each $x'\neq x$ the triples in $T_{x'}$
differ from those of $T_x$ on the first column, whereas when
$T_x$ contains two triples (cases (iii) to (iv-a)), those triples
differ on both the second and third column.  To validate our
construction, it remains to show that $f(R)$ also satisfies
$\key{f(R)}{2, 3}$, which is the cornerstone of our construction.
Let $\alpha, \alpha', \beta, \gamma$ be such that $(\alpha,
\beta, \gamma) \in f(R)$ and $(\alpha', \beta, \gamma) \in f(R)$. 

Let us first assume that $\beta = \gamma$.  Then the tuples must
have been generated by a case among (i-a), (iv), (iv-a) or (v).
If (i-a) has been used on, say, $\alpha$, then $\alpha = \beta =
\gamma$. Hence if (i-a) has been used on both $\alpha$ and
$\alpha'$ then $\alpha=\alpha'$. Otherwise we can assume w.l.o.g.
that  $\alpha$ has used one of (iv), (iv-a) or (v). Then $\alpha
= R[1,3](\beta)$ and $\key{R}{1}$ so if $\alpha'$ also has used
one of these rules, we also have $\alpha = R[1,3](\beta)$ hence
$\alpha' = \alpha$. For the same reason, if $\alpha'$ used rule
(i-a) then $\alpha' = \beta = R[1,3](\alpha')$ and therefore
$\alpha = R[1,3](\beta) = \alpha'$.  We have thus proved that
when $\beta = \gamma$, $\alpha'=\alpha$.

Let us now assume that $\beta \neq \gamma$.  Then the tuples must
have been generated by a case among (i), (ii), (iii), (iii-a) or
(iv) (and in the latter case only as $(x, x, z)$ and not as $(x,
y, y)$).  If the rules generating $\alpha$ and $\alpha'$ both
feature an $x$ in the second component, or both feature an $x$ in
the third component, then $\alpha=\alpha'$: this is the case when
we consider for instance 2 applications of (i), or when $\alpha$
is generated by (i) and $\alpha'$ by (iv).  The only remaining
possibility is therefore that one of the rules (w.l.o.g., the
rule generating $\alpha$) is one of (ii), (iii), or (iii-a)
generating a triple of the form $(x,y,x)$, i.e., a triple
$(\alpha, \beta, \gamma)$ such that $\alpha = \gamma$, and where
additionally $\alpha \neq R[1,3](\beta)$, whereas the other rule
is one of (i) or (iv) generating a triple $(\alpha', \beta,
\gamma)$ such that $\alpha' = \beta$ and (by construction)
$\gamma = R[1,3](\alpha')$.  Combining the two properties yields
a contradiction as $\alpha$ must simultaneously equal and differ
from $R[1,3](\alpha')$.  This concludes our proof that $f(R)$
satisfies $\key{f(R)}{2, 3}$ and therefore belongs to
$\T_{6}$.

\end{document}